\newtheorem{theorem}{Theorem}[section]
\newtheorem{lemma}[theorem]{Lemma}
\newtheorem{definition}{Definition}
\newcommand{\argmax}[1]{\underset{#1}{\operatorname{argmax}}}
\newcommand{\A}{\mathcal{A}}
\newcommand{\mechname}{{\sc align-trust}}
\newcommand{\bks}{\textsc{bks}}
\newcommand{\vmm}{\textsc{vmm}}
\newcommand{\fq}{\textsc{fq}}
\newcommand{\vcg}{\textsc{vcg}}
\newcommand{\spq}{\textsc{spq}}
\newcommand{\fifo}{\textsc{fifo}}
\begin{document}

\date{}
\title{Truthful Prioritization Schemes for Spectrum Sharing}

% author names and affiliations

\author{Victor Shnayder, David C. Parkes%
  \thanks{\texttt{\{shnayder,parkes\}@eecs.harvard.edu}}}
\affil{School of Engineering and Applied Sciences\\
 Harvard University}

\author{Vikas Kawadia%
  \thanks{\texttt{vkawadia@bbn.com}}}
\affil{Raytheon BBN Technologies}

\author{Jeremy Hoon%
  \thanks{\texttt{jhoon@stripe.com}}}
\affil{Stripe}

% make the title area
\maketitle

\begin{abstract}
We design a protocol for dynamic prioritization of data on shared routers such
as untethered 3G/4G devices. The mechanism prioritizes bandwidth in favor of
users with the highest value, and is incentive compatible, so that users can
simply report their true values for network access. A revenue pooling mechanism
also aligns incentives for sellers, so that they will choose to use
prioritization methods that retain the incentive properties on the buy-side. In
this way, the design allows for an open architecture. In addition to revenue
pooling, the technical contribution is to identify a class of stochastic demand
models and a prioritization scheme that provides allocation monotonicity.
Simulation results confirm efficiency gains from dynamic prioritization relative
to prior methods, as well as the effectiveness of revenue pooling.
\end{abstract}

\clearpage

\tableofcontents

\clearpage

\section{Introduction}

Many mobile broadband data users overpay for  data plans, buying a data plan
with sufficient monthly quota for their maximum needs even though the average
consumption is only about 15\% of the monthly quota~\cite{Cisco2011}. Analysis
of cell phone data has also shown \emph{quota
dynamics}~\cite{andrewsunderstanding} in users, i.e., sensitivity  to quota
balance and time to the end of the quota period.  One way to address these
inefficiencies is to allow surplus bandwidth to be shared. In this paper we
propose an auction-based protocol for such bandwidth-sharing. Sellers have
wireless broadband (3G/4G) devices with a WiFi like WLAN radio, and can
share network access through tethering apps that allow them to act as routers.
Buyers have WiFi devices and pay sellers to relay their data.

Rather than finding optimal static allocations to users, our focus is on dynamic
prioritization of access to bandwidth.  Dynamic prioritization is more efficient
because it can handle temporal heterogeneity in user demand.  We design an
auction protocol to prioritize network access in favor of those with highest
value. In ensuring simplicity for users, we seek an incentive compatible design,
so that truthful reporting of value and straightforward use of the shared
bandwidth (no delaying of traffic, no padding of traffic) is optimal for a user.
Incentive compatibility is also useful in avoiding ``churn'' and system overhead
that can occur if users can benefit by adapting reported values given reports of
others. Similar arguments have been made in the context of sponsored search
markets~\cite{Edelman2007a}.

Our positive results are stated for users with linear (per-byte) valuation
functions. Give this, incentive compatibility requires that the cumulative
quantity of network resources consumed by a user is non-decreasing in bid value,
a property referred to as {\em monotonicity}. Our main technical contribution is
to establish conditions on user demand models for which a strict priority-queue
approach to resource access control satisfies monotonicity. An important
property satisfied by the demand models is that a user's cumulative consumption
of network bandwidth over a user session weakly increases as a function of her
total consumption up to any intermediate time. An additional technical challenge
that we address is to ensure that users cannot benefit through delayed use of
allocated network resources, or by introducing fake traffic to increase demand.

Payments are computed by adopting an approach due to Babaioff, Kleinberg, and
Slivkins~\cite{Babaioff:2009p602} (\bks{}). This involves adding an additional, random
perturbation to bids.\footnote{A side effect is that the scheme does not reduce
to fair share in the case of $n$ identical buyers. Rather, the scheme randomly
perturbs each bid before ranking them for the purpose of determining network
prioritization.}  In a typical auction setting, the auctioneer knows how an
allocation of resources would have changed given different bids. This
counterfactual information is essential to standard payment schemes but
unavailable here. It would require knowledge of how much bandwidth a user would
consume under a different priority, but this requires knowledge  that  the
network infrastructure does not have about a user's demand model. The only
information available for the purpose of determining payments is the actual
realization of consumption based on the actual assignment of priority. The
introduction of random perturbation by the \bks{} scheme avoids the need for
this counterfactual information. 

We also extend our design so that it embraces open network
architectures.  In particular, we are robust to router
devices that can install alternative routing software,
for example to change prioritization schemes, or otherwise tamper with
methods to compute payments. Thus, we seek to align incentives
on the sell side as well as the buy side of the market.  Our solution
adopts lightweight cryptography and revenue pooling across sellers,
and has the effect of aligning the interests of sellers with adopting
routing policies that maximize total buyer value. This is sufficient
for monotonicity of user allocations, which is in turn sufficient for
buyer truthfulness. Revenue pooling makes the market design appear
simultaneously as a first-price market for sellers and a second-price
market for buyers. For sellers, this means they prefer to maximize the
(bid) value of the allocation. For buyers, this retains the
second-price-like \bks{} scheme, and thus incentive compatibility.

A simulation study confirms that the mechanism achieves arbitrarily
close approximations to full allocative efficiency for simple demand
models (allocating the shared resource to those who value it the
most), shows that reserve prices can increase efficiency of
non-prioritized routing methods, demonstrates a scenario where sellers
have practical efficiency-improving deviations, and examines the
distributional effects of revenue pooling. We have also prototyped our scheme
using the traffic control module in the Linux kernel.

\subsection{Related work}

Sen et al.~\cite{Sen:DataPricingSurvey} provide an exhaustive survey of the
large literature on smart data pricing. We focus on describing work related to the specific
techniques that we use.

Babaioff et al.~\cite{Babaioff:2009:CTM} and Devanur et
al.~\cite{Devanur:2009:PTP} show that the unavailability of counterfactuals
impedes the design of truthful mechanisms in the context of multi-armed bandit
problems. We employ the \bks{} scheme in a new application domain.  The effect
is that the payment scheme that we adopt accounts for varying user demand,
providing an unbiased estimate of the cost imposed on other buyers by the
prioritization associated with a buyer's device.  In contrast, an earlier scheme
due to Varian and Mackie-Mason (VMM)~\cite{Varian:1994p598} adopts a myopic,
per-packet viewpoint on the cost imposed on other users and is not incentive
compatible when buyers have adaptive demand patterns. 

Other approaches from dynamic mechanism design are unsuitable, either because
they rely on counterfactual information~\cite{hajiaghayi05} or rely on a
probabilistic demand model~\cite{parkes03c}.  VMM's work inspired other
approaches, such as the progressive second-price auction~\cite{Lazar:1998p588},
and some follow-on work~\cite{Maille:2003p586}.  As with VMM, they also do not
achieve incentive alignment in dynamic settings.  They also assume the existence
of a trusted router.

Godfrey et al.~\cite{Godfrey2010a} study incentive compatibility of
congestion control mechanisms in networks.  Our technical analysis of
the separation between bidders is similar to the separation between
flows in this earlier model, but the domain of study and main results
are otherwise incomparable. We consider user bidding and payments,
whereas they analyze the effect of manipulations in forwarding
strategies on network-wide congestion.

In a different domain, Shneidman and Parkes~\cite{Shneidman2004} study the
problem of faithful network protocols for BGP routing, where the algorithms
adopted by network users must themselves form part of an equilibrium. In this
sense, our revenue pooling scheme attains faithfulness with respect seller
routing algorithms.

Also related is a large literature on the use of cryptographic solutions to
provide trustworthy auctions; e.g.~\cite{parkes07}.  However, these solutions
incur too much overhead in the context of dynamic bandwidth prioritization.
Porter and Shoham~\cite{porter05} provide an analysis of how the presence of
cheating provides a second-price auction with first-price semantics. Our revenue
pooling scheme, used for aligning trust does the opposite: we give a
second-price auction first-price semantics from the perspective of the
auctioneer (or seller, in our model), thereby mitigating incentives for
manipulation; the scheme borrows ideas from a random-sampling approach used in a
very different domain, that of the design of revenue-optimal digital good
auctions~\cite{Goldberg01}. 

\section{Model}

In this section we describe our model, also shown in Fig.~\ref{fig:bandex_model}.
\begin{figure}[t]
\begin{center}
\includegraphics[width=0.65\linewidth]{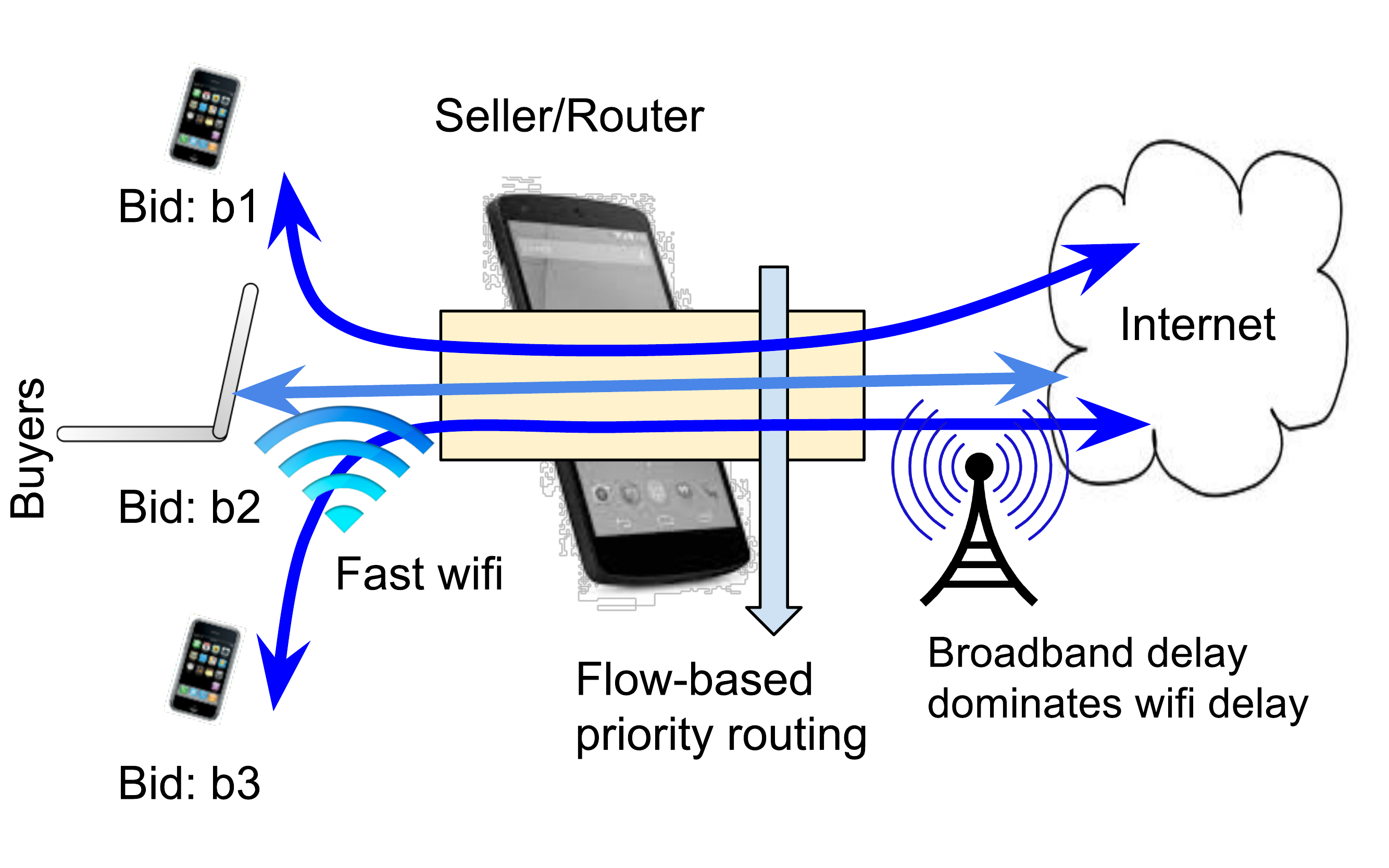}
\caption{Illustration of our model.}
\label{fig:bandex_model} 
\end{center}
\end{figure}

{\bf Time and value:} Time is modeled in discrete epochs. Each buyer
(agent) $i$ has an allocation period $[\alpha_i, \beta_i]$, with
$\alpha_i \ge 0$ and $\beta_i \ge \alpha_i$. This is the period of
time over which the buyer associates value with receiving bandwidth.
Each buyer has constant value $v_i\ge 0$ per byte (a linear
valuation), and bids once before sending data.

{\bf Routing:} We adopt a flow model of traffic, with a single flow modeling
both upstream and downstream traffic for each buyer.  The router's link to the
Internet is assumed to have fixed capacity $c>0$.  Each buyer has bid $b_i \ge
0$, and receives a {\em priority} based on this reported value.

We consider three schemes for network prioritization:

1. First-in First-out (\fifo{}) routes data in a first-in, first-out order as it
arrives at the router. Since we do not model the queue of a router directly, we
model \fifo{} as a within-epoch allocation proportional to within-epoch demand.

2. Fair queueing (\fq{}) allocates bandwidth to all buyers evenly, dividing any
unused capacity recursively: in each period, each of the $n$ buyers is
guaranteed to be able to consume $c/n$.  Any unused capacity is divided evenly
among buyers with further demand, and this repeats until everyone is satisfied
or capacity is exhausted.

3. Strict Priority Queueing (\spq{}) first allocates capacity to the buyer with
the highest bid in each epoch, up to $c$ or the demand of the buyer. Then,
capacity is allocated to the buyer with the second highest bid, and so forth, as
long as capacity is available. Ties are broken at random. 

\label{sec:policies}
\smallskip

Let $c_{i,t}$ denote the network capacity available to buyer $i$ in epoch $t$
(this depends on her bid, as well as the bid and network usage of others.)

{\bf Transport: }  We assume that the local network between router and buyers is
much faster than the router's link to the Internet and neglect delays there.  

{\bf Demand: } Let $x_{i,t}$ (bytes) denote the  cumulative amount of traffic
(upload or download) associated with buyer $i$ up to and including time $t$.
This may be smaller than the total capacity available to buyer $i$ up to and
including time $t$ because it depends on the buyer's demand.

Let $D_i(t,x)$ (bytes) denote the demand of buyer $i$ during epoch $t$, where
$x$ is the cumulative amount of traffic (upload or download) used so far; i.e.,
$x=x_{i,t-1}$. Equivalently, this represents the maximum amount of network
capacity that buyer $i$ wants to consume during the epoch. $D_i(t,x)$ is a
random variable, and we adopt $d_i(t,x)$ to denote a specific realization. We
consider demand models $D_i$ that satisfy the following condition:

\footnote{Conceptually, we can think about the demand  for all
    $(t,x)$ values as being realized by ``nature'' when a buyer arrives,
capturing the type of the buyer (even if it is unknown to the buyer.)}

\begin{definition}
\label{def:natural}
A demand model $D_i(t,x)$ is \emph{natural}  if all realizations $d_i(t,x)$ 
satisfy for all $t$, for all $x \ge x'$ and all $c \ge 0$:
$$x + \min(c,d_i(t, x)) \ge x' + \min(c,d_i(t, x')).$$
\end{definition}

A demand model is natural when 
getting more capacity earlier might
increase demand in the future, and should not decrease future demand
below the total amount given lower capacity. 
\label{sec:natural-examples}
In presenting some examples of demand models that are natural in this
sense, we focus on deterministic examples.  Randomized models can be
created through random perturbations to the parameters of the example,
and also through randomization over the different kinds of models.

\emph{Constant demand:}
A buyer $i$ who simply wants to send at some constant rate: $d_i(t,x) = k$. 

\emph{Time-varying demand:} A buyer $i$ with an arbitrary time-dependent
generation process $g(t)$ that generates data that must be sent immediately to
be useful: $d_i(t,x) = g(t)$.

\emph{Buffered demand:} A buyer $i$ with an arbitrary time-dependent generation
process $g(t)$ that generates data, and wants to send as much of that as
possible, buffering demand until it is allocated: $d_i(t,x) = \sum_{p \le t}
g(p) - x$.

\emph{Impatient buyer:}
\label{example:impatient}
A buyer $i$ who sends until time $p$, and then goes away if some minimum amount of service $m$ has not been met:
\[d_i(t,x) = \begin{cases}
k \text{ if } t \le p\\
k \text{ if } t > p \text{ and } x > m\\
0 \text{ else }
\end{cases} \]
This can be generalized to multiple thresholds, reduction to a lower, but non-zero rate, etc.

\emph{Increasing demand (in rate)}
A buyer that has $d_i(t,x) = g(x/t)$, with $g$ weakly increasing.

\emph{Increasing demand (in total)}
A buyer that has $d_i(t,x) = g(x)$, with $g$ weakly increasing.
\medskip

This impatient buyer model illustrates that the requirement of "natural"  demand
functions allow those users to be modeled whose future value for network access
falls if they don't receive enough bandwidth.

These demand models capture many realistic types of demand, but some demand functions
do not fit our model.  In particular, we do not allow demand to depend on recent
usage, precluding models like ``if I have not 
been able to use the network
in the last 30 seconds, give up''.

\section{Prioritization and Payments}
\label{sec:mechanisms}

Having introduced the basic elements of our model, we now describe
some variations on payment schemes.

\subsection{Fixed price}

A baseline comparison is provided by a fixed price payment scheme.
This can be used together with the \fifo{} and \fq{} routing policies,
where only the users who bid above the fixed price are considered by
the prioritization schemes.

\subsection{VMM mechanism}
\label{sec:vmm}

The \vmm{} mechanism provides a second comparison point. This mechanism uses the
\spq{} routing policy. For payments, the original paper by Varian and
Mackie-Mason used a per-packet model, and charged the owner of each forwarded
packet the \emph{immediate externality} imposed by the packet.  This is the
value of the highest value packet that was dropped while the forwarded packet
was in the router's queue.  

Since we do not model the router queue directly, we adapt the idea
behind the payment mechanism to our flow-based model by charging the
immediate externality imposed by a buyer's flow in each period,
computed using the standard Vickrey-Clarke-Groves (\vcg{}) mechanism,
charging $V_{-i} - V^*_{-i}$ for each buyer $i$ in each period, where
$V_{-i}$ is the total (reported) value to all the other buyers in the
(counterfactual) optimal allocation when $i$ is removed, and
$V^*_{-i}$ is the total (reported) value to all the other buyers in
the real optimal allocation with $i$ included.

\vmm{} is a natural mechanism, and has inspired a lot of work on
bandwidth pricing. However, the \vmm{} approach is not incentive
compatible in a setting with variable demand. Consider the following:
\smallskip

{\bf Example 1.} \emph{Suppose that the router link capacity is 1 packet per
second, and there is one buyer with value \$3 per packet who wants to send one
packet every second and a second buyer with value \$2 per packet who wants to
send a single packet, and keeps trying until she succeeds.  Truthful bidding by
buyer 1 will result in the second buyer's packet being dropped every period, and
a charge of \$2 for each packet, representing the per-epoch externality on buyer
2.  On the other hand, a bid of less than \$2 would ``flush" the packet of buyer
2 and then allow buyer 1 to send for the remaining epochs with payment \$0. The
tradeoff is to reduce the amount of data forwarded by 1 packet in return for a
significant reduction in total payment.}

\subsection{The BKS Mechanism}
 \label{sec:bks}

In the example above, \vmm{} over-estimates the externality because it does not
have access to the information that buyer 2 only has a single packet to send.
This information is not available, since we assume that demand models are not
described by users or known by the prioritization scheme.  In fact, even the
user themselves may not know their future demand before it is realized.

In this domain, the \bks{} mechanism adopts the \spq{} routing policy.  Payments
are determined following a self-sampling approach, where a randomized
perturbation to bids obviates the need for counterfactual information. The idea
is to obtain an estimate of the network resources that a user would have
consumed at some lower bid as a side-effect of the randomization.

We adapt the scheme to also allow the seller to employ a {\em reserve price}
$r\geq 0$, which is the minimal per-byte price a seller will accept.\footnote{To
  support the reserve price, we use the \emph{h-canonical self-resampling
  procedure} described by \bks{}, with $h(z,b) = r + z (b-r)$, which has
  distribution function $F_h(a,b) = (a-r)/(b-r)$.  In Section 3.4, the \bks{}
  paper claims that $F_h = F_0$ for all $h$, where $F_0$ is the distribution
  function for the canonical resampling procedure, but $F_0$ doesn't satisfy
  their condition on $F_h$: $h(F_h(a,b), b) = a\quad \text{for all } a, b\in I,
a < b$.  $F_h(a,b) = (a-r)/(b-r)$ does satisfy this condition.}  The scheme is
parameterized by $\mu\in (0,1)$, which governs the probability of introducing a
random perturbation into bids. 

The scheme is most easily explained in terms of an \emph{allocation rule} $\A$.
For us, this encapsulates the combined effect of realized demand (including
possible strategic effects by users delaying demand or creating fake demand),
the routing policy, the total capacity constraint on the router, and user bids.
Taken together, these elements define the total amount of traffic consumed by
each user (equivalently, the realized allocation).

The allocation $\A(\tilde{b},d)$ generated by the rule is \emph {ex
  ante} uncertain---it depends on realized demand
$d=(d_1,\ldots,d_n)$, where $d_i$ denotes the realized demand by $i$
in each epoch, and is applied to randomly perturbed bids $\tilde{b}$.
The \bks{} payment scheme works as follows:

\begin{definition}[\bks{}]
{\em Given an allocation rule $\A$ and a parameter $\mu \in (0,1)$, 
the \bks{} procedure in our setting is:
\begin{enumerate}
\item Upon arrival, each bidder $i$ submits a bid $b_i\geq r$. 
\item The mechanism computes transformed bids $\tilde{b}_i$ $\forall$ $i$:  
  \begin{enumerate}
  \item With probability $1-\mu$, $\tilde{b}_i = b_i$
  \item \label{item:resample} Else, compute a reduced bid: pick $\gamma \in [0,1]$ uniformly at random, and set
  $\tilde{b}_i = r + (b_i - r) \cdot \gamma^{1/(1-\mu)}$
  \end{enumerate}
\item For all epochs, the router uses the allocation rule $\A(d,\tilde{b})$ applied to the transformed bids $\tilde{b}$ of the active users and given
realized demand $d=(d_1,\ldots,d_n)$.
\item Given the amount of network
data, $x_i\geq 0$, associated
with each user $i$, (as realized by
the allocation rule $\A$ applied to transformed bids), 
collect payment from user $i$ as:
  \begin{enumerate}
  \item Collect $b_i x_i$.
  \item \label{item:rebate} If $\tilde{b}_i < b_i$, give a \emph{rebate} $R_i = \frac{1}{\mu}(x_i(b_i - r))$. Otherwise, $R_i = 0$.
  \end{enumerate}
\end{enumerate}}
\end{definition}

Bids are perturbed, used for prioritized routing, the total number of
bytes associated with a user is observed, and payments are made
through a randomized adjustment via the rebate in step 4. Each user's
rebate can be determined at the end of her allocation period,
allowing the user's payment to be computed while other users are still
active.  
\begin{definition}
\emph{A mechanism is \emph{truthful-in-expectation} if a risk-neutral buyer
maximizes expected utility by bidding truthfully, 
whatever the bids of others,
where the expectation is taken with respect to random coin flips of
the mechanism.}
\end{definition}

An essential property for truthfulness is the {\em ex post
  monotonicity} of the allocation rule. 

Let $\A_i(b',d)$ denote
the allocation to buyer $i$ given some bid vector $b'$ and demand
vector $d$. (We write $b'$ to denote a generic bid vector and avoid
confusion with the $b$ submitted as input to \bks{}.)
An allocation rule is {\em ex post monotone} if
$\A_i(b'',d)\geq \A_i(b',d)$, for all bid vectors $b'$, all
demand vectors $d$, and all
$b''=(b'_1,\ldots,b_i'',\ldots,b'_n)$ such that $b''_i>b'_i$. This is {\em
  ex post} in the sense that whatever the bids and whatever the demand,
a buyer's total traffic
consumption weakly increases with her bid.
We use the following result:
\begin{theorem}
\label{thm:bks-truthfulness}
Applying the \bks{} procedure with probability of
perturbation $\mu$ to an allocation rule that is {\em ex post} monotone 
results in a truthful-in-expectation mechanism. 
\end{theorem}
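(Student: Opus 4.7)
The plan is to reduce the claim to the original self-resampling argument of Babaioff, Kleinberg and Slivkins by fixing enough of the randomness to turn the mechanism into a single-agent, single-parameter problem and then invoking Myerson's characterization. The first step is to condition on the realized demand vector $d$ and the bids $b_{-i}$ of all other agents. Once these are fixed, the only remaining randomness relevant to buyer $i$ is the coin flip that decides whether to perturb her bid and the auxiliary variable $\gamma$. Because the allocation rule $\A$ is ex post monotone by hypothesis, the induced function $b_i' \mapsto \A_i(b_i', b_{-i}, d)$ is weakly non-decreasing. This is the only structural property BKS need to carry their argument through.

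Next I would write the expected allocation and expected net payment as explicit functions of the reported bid $b_i$. Using the definitions in Section~\ref{sec:bks}, the expected allocation is
\[G(b_i) = (1-\mu)\A_i(b_i) + \mu\int_0^1 \A_i\!\bigl(r + (b_i-r)\gamma^{1/(1-\mu)}\bigr)\,d\gamma,\]
and the expected net payment simplifies to
\[P(b_i) = b_i G(b_i) - (b_i - r)\int_0^1 \A_i\!\bigl(r + (b_i-r)\gamma^{1/(1-\mu)}\bigr)\,d\gamma,\]
since with probability $\mu$ the rebate $\tfrac{1}{\mu}x_i(b_i-r)$ cancels a $(b_i-r)$ fraction of the gross payment. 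Monotonicity of $G$ in $b_i$ is immediate from the ex post monotonicity of $\A_i$ together with the fact that the perturbed-bid distribution $F_h(\cdot,b_i) = (\cdot-r)/(b_i-r)$ raised to the power $1-\mu$ is stochastically increasing in $b_i$.

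The main step is to verify the Myerson payment identity $P(b_i) = b_i G(b_i) - \int_r^{b_i} G(z)\,dz + P(r)$. Substituting $z = r + (b_i-r)\gamma^{1/(1-\mu)}$ in the integral defining $P$ and applying Fubini to interchange the order of integration reduces this identity to the h-canonical self-resampling identity of \bks{} --- this is precisely what the exponent $1/(1-\mu)$ and the specific form $h(z,b) = r + z(b-r)$ (noted in the footnote) are engineered to produce. Once the identity and monotonicity of $G$ are in hand, Myerson's lemma gives $v_i G(v_i) - P(v_i) \geq v_i G(b_i) - P(b_i)$ for every misreport $b_i \geq r$, and any report $b_i < r$ is dominated by bidding $r$ since it yields zero allocation and zero utility. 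Finally, taking expectation over the demand $d$ preserves the inequality since it holds pointwise in $d$.

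The main obstacle is the payment-identity verification under a non-zero reserve price $r$: this is where the vanilla BKS construction fails (as the footnote explains) and must be replaced by the h-canonical variant. The change of variables and the bookkeeping of the $(b_i-r)^\mu$ factors need to be handled carefully so that the $\mu$-weighted integral over $\gamma$ telescopes into the plain Myerson integral $\int_r^{b_i} G(z)\,dz$ rather than a weighted version of it.
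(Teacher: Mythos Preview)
The paper does not supply its own proof of this theorem; it simply cites Babaioff--Kleinberg--Slivkins and remarks that the self-resampling yields an unbiased estimate of the Myerson payment integral. Your reconstruction is exactly that argument, spelled out with the $h$-canonical resampling $h(z,b)=r+z(b-r)$ to accommodate the reserve price, so your approach coincides with the paper's (deferred) proof.

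One small correction worth tightening when you write it up: the Myerson identity you state should read $P(b_i)=b_iG(b_i)-\int_r^{b_i}G(z)\,dz + \bigl(P(r)-rG(r)\bigr)$ rather than $+P(r)$; as written, evaluating at $b_i=r$ forces $rG(r)=0$, which need not hold. Also, when you ``fix $b_{-i}$'' you should explicitly fix the other agents' perturbation coins as well (equivalently, condition on $\tilde b_{-i}$), since the allocation rule is applied to perturbed bids --- your next sentence makes clear you intend this, but it is worth saying outright. Neither point affects the soundness of the approach.
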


The proof in \cite{Babaioff:2009p602} shows that the scheme obtains an unbiased
sample of an integral that defines the payment rule in the canonical
approach of incentive-compatible mechanism design~\cite{Myerson81}.
The allocation is the same as in
the original allocation rule with probability at least $1-n\mu$, where
$n$ is the number of buyers.
\smallskip

{\bf Example 2.} {\em 
Let's revisit the earlier example of manipulation in the \vmm{} scheme. Under
\bks{}, when the first buyer's bid is not resampled, she pays \$3 per packet,
and has some total allocation $k$.  When her bid is resampled, the first buyer
will have allocation $k$ or $k-1$, depending on whether the resampled bid was
below \$2.  This will result in a large rebate, and in expectation, the first
buyer's total payment will be essentially \$2, with the exact value depending on
$\mu$.}

\section{Buyer incentives}
\label{sec:buyer-truthfulness}

To establish truthfulness of the \bks{} mechanism in our setting we need to show
that \spq{} combined with natural demand models implies that the allocation rule
is {\em ex post} monotone.  In our setting, the allocation rule is the process
that determines the total network traffic used by each buyer over her allocation
period. 

We first show that \spq{} provides an isolation property on buyers, and then analyze the monotonicity of our allocation rule.  
\begin{lemma}[isolation]
\label{lem:31}
For any $t$, capacity $c_{i,t}$ under \spq{} is independent of how capacity is used by buyer $i$ in time $< t$, and weakly increases with bid $b_i$.
\end{lemma}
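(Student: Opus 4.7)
I would prove both claims simultaneously by induction on the priority rank of the buyers. Fix an epoch $t$ and order all buyers by their bids, with higher bids getting higher priority (breaking ties by the fixed random draw used by \spq{}). Under \spq{}, the capacity $c_{i,t}$ equals $c$ minus the total capacity consumed in epoch $t$ by buyers strictly above $i$ in the priority order; those higher-ranked buyers each consume $\min(c_{j,t}, d_j(t, x_{j,t-1}))$ for their own realized demand. The key observation is that $d_j(t, x_{j,t-1})$ depends only on $j$'s own cumulative consumption $x_{j,t-1}$, and $x_{j,t-1}$ is in turn determined entirely by $j$'s demand realizations and $j$'s capacities in prior epochs, which by induction depend only on buyers above $j$. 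Nothing in this chain ever references $i$'s past consumption, so $c_{i,t}$ is independent of how $i$ used its capacity in epochs $< t$.

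For the monotonicity claim, I would compare two scenarios identical except that $i$'s bid increases from $b_i$ to $b_i''$. This change can only weakly shrink the set of buyers with strictly higher priority than $i$ (some buyer $j$ with $b_i < b_j \le b_i''$ may drop below $i$), while leaving the ordering among buyers above $i$ unchanged. By the independence established in the first part, the per-epoch consumptions of the buyers who remain above $i$ are identical in both scenarios (their entire histories $x_{j,t-1}$ are unaffected, since none of them saw $i$ above them before or after, and the buyers above \emph{them} are unchanged). Therefore $c_{i,t}$ in the higher-bid scenario equals $c$ minus a subset of what was subtracted in the lower-bid scenario, which is weakly larger.

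\textbf{Main obstacle.} The subtle point is making the induction actually well-founded: one has to argue that the histories $x_{j,t-1}$ of the higher-priority buyers are genuinely the same in the two scenarios, not merely that their per-epoch allocations are determined by the same formula. This requires unrolling the induction over both the priority rank and the epoch index $t$ simultaneously, so that at each epoch I can invoke the inductive claim for earlier epochs to conclude that every higher-priority buyer's cumulative consumption is unchanged. Tie-breaking (which \spq{} does at random) needs to be fixed by a common coin for the comparison to be meaningful, but since the statement is about the allocation rule given its internal randomness, this is just a matter of conditioning on the same random bits in both scenarios.
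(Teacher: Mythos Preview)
Your proposal is correct and follows essentially the same approach as the paper: strong induction on priority rank, observing that buyer $i$'s capacity is the residual left by higher-ranked buyers whose allocations are, by induction, unaffected by $i$, and then noting that a higher bid only shrinks the set of buyers ranked above $i$. You are more explicit than the paper about the need to couple the tie-breaking randomness and to unroll the induction over epochs to pin down the histories $x_{j,t-1}$, but these are refinements of the same argument rather than a different route.
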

\begin{proof}
Consider $n$ buyers, and order them by bids $b_1\ge b_2\ge
\ldots \ge b_n$ with ties broken at random. Let's first consider the
claim that capacity $c_{i,t}$ in epoch $t$ is independent of how the
capacity is used by buyer $i$ in earlier epochs.  Proceed by strong
induction.  For buyer 1, then this is immediate since the buyer always
gets to use full the capacity of the channel.  For buyer $i>1$, given
the induction hypothesis for buyers $< i$, buyer $i$ cannot affect
demand or allocation to higher priority buyers and is not affected by
the use of the capacity by lower priority buyers. The result is that
buyer $i$ gets all capacity that is unused by the higher
priority buyers. The capacity $c_{i,t}$ is weakly increasing with bid
value because of \spq{} routing. In particular, if buyer $i>1$ gets
a higher priority $i'<i$ then her capacity in period $t$ will be the capacity unused by all buyers $i''<i'$, whereas previously its
capacity was that unused by all buyers $i''<i$.
\end{proof}

Given this lemma, we can now focus on an arbitrary buyer $i$, and simplify
notation by omitting the subscript $i$. In particular, we use $d_t$ to denote
the realized demand of the buyer in period $t$ (keeping the dependence on total
network capacity used so far silent), and $c_t$ to denote the buyer's realized
capacity. Both $d_t$ and $c_t$ are realizations of a random process (the latter
due to its dependence on the demand models of other buyers.)
We first establish monotonicity and truthfulness properties of the mechanism
under the assumption that the buyer is greedy in her use of the channel.
\begin{definition} Given realized capacity $c_t$ and demand $d_t$ in epoch $t$,
  a buyer using the \emph{greedy} policy sends  $\min(c_t, d_t)$ to the router
  in epoch $t$.  \end{definition}

The greedy policy stipulates that the buyer makes her 
demand $d_t$ available to the router (up to capacity $c_t$),
and neither pads the demand with fake traffic nor hides demand
by introducing a delay.
\begin{lemma}[monotonicity]
\label{lemma:monotonicity} If buyer $i$'s demand model is natural and routing is done using \spq{}, then fixing bids of other buyers and realized 
demand, buyer $i$'s allocation up to and including any epoch $t$ under
the greedy policy
is {\em ex post} monotone in bid value $b_i$.
\end{lemma}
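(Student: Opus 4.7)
The plan is to combine the isolation lemma with the ``natural'' demand property in a straightforward induction on epochs. Fix all bids $b_{-i}$ and the realized demand process. Consider two bid levels $b_i'' > b_i'$ for buyer $i$, and let $c_t, c_t'$ denote the per-epoch capacities and $x_t, x_t'$ the cumulative greedy allocations under $b_i'$ and $b_i''$ respectively, with $x_0 = x_0' = 0$.

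First, I would invoke Lemma \ref{lem:31} (isolation) to conclude two facts that together enable the induction: (i) the capacity $c_t$ in each epoch $t$ does not depend on how buyer $i$ used capacity in earlier epochs, so we are comparing the same underlying random sequence of ``offered capacities'' against two different greedy consumption paths; and (ii) $c_t'' \ge c_t'$ for every $t$ since capacity is weakly increasing in bid. Under the greedy policy and with the demand realization $d(t, x_{t-1})$ depending on cumulative usage, we then have the recurrences $x_t = x_{t-1} + \min(c_t', d(t, x_{t-1}))$ and $x_t'' = x_{t-1}'' + \min(c_t'', d(t, x_{t-1}''))$.

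Now I would prove $x_t'' \ge x_t$ by induction on $t$. The base case $t=0$ is immediate. For the inductive step, I chain two inequalities. First, since $c_t'' \ge c_t'$,
\[
x_t'' = x_{t-1}'' + \min(c_t'', d(t, x_{t-1}'')) \;\ge\; x_{t-1}'' + \min(c_t', d(t, x_{t-1}'')).
\]
Second, applying the natural-demand inequality from Definition \ref{def:natural} with $x = x_{t-1}''$, $x' = x_{t-1}$ (valid by the inductive hypothesis $x_{t-1}'' \ge x_{t-1}$), and $c = c_t'$,
\[
x_{t-1}'' + \min(c_t', d(t, x_{t-1}'')) \;\ge\; x_{t-1} + \min(c_t', d(t, x_{t-1})) \;=\; x_t.
\]
Concatenating yields $x_t'' \ge x_t$, closing the induction.

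The only subtle point, and the step I would be most careful about, is justifying that the natural demand inequality is exactly the right hypothesis: the induction must simultaneously propagate that the higher-bid path is (weakly) ahead in cumulative consumption and absorb the fact that the demand function itself is history-dependent. This is precisely what Definition \ref{def:natural} is engineered to give---being ahead in $x$ cannot cost you this epoch's increment, given the same capacity. Everything else (capacity monotonicity, independence from own history) is delivered by the isolation lemma, so no further argument about other buyers' behavior is needed.
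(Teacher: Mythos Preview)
Your argument is correct, though you have a notational slip: you introduce $c_t, c_t'$ and $x_t, x_t'$ in the first paragraph but then work with $c_t', c_t''$ and $x_t, x_t''$ in the induction. Once the primes are straightened out, the two-step chain (capacity monotonicity, then the natural-demand inequality at the lower capacity) closes the induction cleanly.

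The paper takes a different route. It first proves an auxiliary result, Lemma~\ref{lemma:maxforward}: for any \emph{fixed} capacity sequence, the greedy policy maximizes the cumulative usage $x_t$ among all feasible usage sequences. Monotonicity is then obtained by contradiction: if greedy under the higher-bid capacities $c_p$ yielded strictly less than greedy under the lower-bid capacities $c'_p \le c_p$, observe that the latter is itself a feasible (non-greedy) usage sequence under the higher capacities, contradicting Lemma~\ref{lemma:maxforward}. Your direct induction bypasses this intermediate lemma and is arguably the cleaner proof of monotonicity in isolation. The payoff of the paper's detour is that Lemma~\ref{lemma:maxforward} is independently useful later, when arguing that a buyer cannot gain by holding back demand: any such deviation is a non-greedy policy and hence, by that lemma, yields weakly smaller allocation.
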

\begin{proof}
The proof uses the following lemma.

\begin{lemma}
\label{lemma:maxforward}
If routing is done using \spq{} and the demand function is natural, then for any $t$, bid values
and realized demand, following
the greedy policy in every period up to and including $t$ maximizes the total 
amount of network capacity used by 
the buyer, $x_t$, up to and including period $t$.
\end{lemma}
\begin{proof}
We proceed by induction on $t$. By fixing bid values and fixing realized
demand we have isolation. The base case is simple: if $t=1$ then 
using less than $\min(c_1, d_1)$ for any $c_1, d_1$ does not maximize $x_1$.

Consider $t > 1$.  Suppose for contradiction that there is some natural demand function $d$ and realization of capacity $c$ such that using $\min(c_p, d_p)$ in every epoch $p \le t$ does not maximize $x_t$.  Let $f$ be a  sequence of usage amounts that does maximize $x_t$.

If $f_t < \min(c_t, d_t)$, setting $f_t = \min(c_t, d_t)$ increases the allocation in epoch $t$ 
and thus the total amount consumed, contradicting the fact that $f$ maximizes $x_t$.

If $f_t = \min(c_t, d_t)$, then there exists some $p < t$ such that $f_p < \min(c_p, d_p)$.  Let $x_{t-1}$ be the total amount used up to and including 
epoch $t-1$ under $f$.
Consider a  sequence $f'$ of usage amounts
that is greedy for all epochs $\leq t$, and let $x'_{t-1}$ be the total used up
to and including $t-1$ under $f'$.  By the induction hypothesis, $f'$ maximizes the amount used up to and including $t-1$, so $x'_{t-1} \ge x_{t-1}$.  Because the demand function $d$ is natural,
\[x'_t = x'_{t-1} + d(t, x'_{t-1}) \ge x_{t-1} + d(t, x_{t-1}) = x_t,\]
so $f'$ is indeed a maximizing sequence, establishing a contradiction.
 \end{proof}

Using this lemma, we obtain Lemma~\ref{lemma:monotonicity}. Suppose otherwise.
Then it would be the case that following the
greedy policy and using the 
 capacity $c_{p}$ in every
epoch $p\le t$ leads to a smaller allocation by epoch $t$ than 
using the greedy policy and using the
 capacity $c'_{p}$ where $c'_{p} \le c_{p}$ for all $p\le t$ (and $c_t$ is
the capacity under a higher bid $b_i>b_i'$, with this relationship between
$c'_t$ and $c_t$ following from Lemma~\ref{lem:31}). But this is a contradiction
with Lemma~\ref{lemma:maxforward}.
\end{proof}

\begin{lemma}
If the buyer is restricted to the greedy policy, her demand is natural, and routing is done using \spq{} then the \bks{} mechanism is truthful and
the dominant strategy is to submit a truthful bid $b_i=v_i$.
\end{lemma}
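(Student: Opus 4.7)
The plan is to derive this lemma as a direct corollary of the two results already in hand: Lemma~\ref{lemma:monotonicity} (ex post monotonicity of the greedy + \spq{} + natural-demand allocation rule) and Theorem~\ref{thm:bks-truthfulness} (\bks{} applied to any ex post monotone allocation rule yields truthfulness-in-expectation). The substantive work has been done earlier; what remains is to verify that the hypotheses of the theorem are met by our allocation rule.

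First, I would identify the allocation rule $\A$ that \bks{} is being applied to. Under the restriction to the greedy policy, $\A_i(b',d)$ is determined entirely by the bid vector $b'$ (which induces the priority ordering used by \spq{}) and the realized demand $d$: the buyer sends $\min(c_t, d_t)$ in each epoch $t$, and capacities $c_t$ are determined by \spq{}. Fixing $d$ and the other buyers' bids, Lemma~\ref{lemma:monotonicity} shows that buyer $i$'s cumulative consumption up to and including any epoch $t$ is weakly increasing in $b_i$; in particular, her total consumption over her allocation period $[\alpha_i,\beta_i]$ is weakly increasing in $b_i$. Hence $\A$ satisfies the ex post monotonicity condition required by Theorem~\ref{thm:bks-truthfulness}.

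Second, I would invoke Theorem~\ref{thm:bks-truthfulness}. The theorem applies to the allocation rule on \emph{any} input bid vector, and in particular to the randomly perturbed bid vector $\tilde{b}$ that the \bks{} scheme feeds into the router. Because ex post monotonicity is a property that holds pointwise over bid vectors, the perturbation is immaterial for verifying the hypothesis. Thus the combined mechanism---perturbation, \spq{} routing under the greedy buyer policy, and the rebate rule---is truthful-in-expectation, so bidding $b_i = v_i$ maximizes buyer $i$'s expected utility regardless of the reports of others.

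The only potentially delicate point, and the one I would be most careful about, is confirming that the isolation result (Lemma~\ref{lem:31}) and monotonicity result (Lemma~\ref{lemma:monotonicity}) remain valid when the bids driving \spq{} are the perturbed bids $\tilde{b}$ rather than the raw bids $b$. But since both lemmas are stated for arbitrary fixed bid vectors and arbitrary realized demand, and since \bks{} treats the perturbed vector as just another bid vector for routing purposes, this is automatic; no additional argument is required beyond noting that monotonicity in $\tilde{b}_i$ is exactly what the \bks{} unbiased-estimator argument needs. The conclusion that truthful bidding is (weakly) dominant in expectation then follows directly.
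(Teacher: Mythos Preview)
Your proposal is correct and matches the paper's own proof, which simply states that the result follows directly from Lemma~\ref{lemma:monotonicity} and Theorem~\ref{thm:bks-truthfulness}. Your elaboration about the perturbed bids being immaterial (since monotonicity is stated for arbitrary bid vectors) is a reasonable clarification, but adds nothing beyond what the paper implicitly relies on.
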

\begin{proof}
Follows directly from Lemma~\ref{lemma:monotonicity} and the properties of \bks{} in Theorem~\ref{thm:bks-truthfulness}.
\end{proof}

But we are also want to show
that users cannot benefit
by delaying network usage, or padding traffic
with fake packets.
\begin{theorem}
Given \spq{} routing and natural demand, the dominant strategy of a user in the \bks{} mechanism is to bid its
true value and follow the greedy policy.
\end{theorem}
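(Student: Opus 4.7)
The plan is to build on the preceding lemma, which establishes truthful bidding as a dominant strategy when the buyer follows the greedy policy. What remains is to show that greedy itself weakly dominates every alternative policy for every fixed bid; combining these two statements yields the theorem. Fix any bid $b_i \ge r$. By Lemma~\ref{lem:31}, the capacity sequence $c_1, c_2, \ldots$ seen by buyer $i$ depends only on her bid and on the other buyers' behavior, not on her own past usage, so the analysis reduces to a single-agent problem with an exogenously given (random) capacity stream.

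Any deviation from greedy decomposes into two effects: \emph{delay}, meaning sending strictly less than $\min(c_t, d_t)$ real bytes in some epoch, and \emph{padding}, meaning injecting fake bytes beyond real demand. For the delay component, Lemma~\ref{lemma:maxforward} gives that greedy maximizes cumulative real bytes, so any delayed strategy yields $x^{\mathrm{real}} \le x^{g}$. The value lost per withheld real byte is $v_i$, and a direct computation with the \bks{} payment rule (summing the contributions from the no-resample and resample cases, the latter including both the base charge and the rebate) shows that the corresponding drop in expected payment is exactly $r$; hence delay costs the buyer $v_i - r \ge 0$ per withheld byte in expectation, which is nonnegative under the participation condition $v_i \ge r$. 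For the padding component, each fake byte carries no value, so its net effect on utility is the negative of its effect on expected payment; an analogous calculation shows that a fake byte treated symmetrically across the two \bks{} scenarios contributes $+r$ to expected payment, giving a utility change of $-r \le 0$.

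Combining the two, greedy is weakly optimal for every bid. Invoking the preceding lemma (truthful bidding is dominant under greedy), the combined strategy of bidding $b_i = v_i$ and following the greedy policy is a dominant strategy for the buyer.

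The main obstacle is the padding analysis. Because the buyer observes her realized capacity, she can partially infer from a drop in capacity whether her bid was perturbed by \bks{}, opening the door to an adaptive strategy that pads only when resampled and thereby attempts to extract positive rent from the rebate (which scales with the total bytes attributed to the resampled bid). Handling this cleanly requires invoking the specific distributional form of the \bks{} perturbation, which is tuned so that no buyer strategy---symmetric or adaptive---can extract positive expected rent in excess of the base payment on fake bytes, by essentially the same integration argument that underpins Theorem~\ref{thm:bks-truthfulness}.
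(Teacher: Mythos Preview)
Your decomposition into delay and padding is reasonable, but the per-byte accounting that drives both halves does not go through. You claim that each withheld real byte reduces the expected \bks{} payment by exactly $r$ (and symmetrically that each padded byte adds $r$). That computation is valid only for a byte that is present or absent \emph{identically} in the non-resampled and resampled executions. Yet you have just invoked Lemma~\ref{lem:31} to observe that the capacity stream $c_t$ depends on the (perturbed) bid: resampling typically lowers priority and hence capacity, so the \emph{same} non-adaptive policy run against the two capacity streams withholds or pads different numbers of bytes in each. Concretely, if greedy yields $x^{NR},x^{R}$ and a delayed policy yields $y^{NR},y^{R}$ in the two cases, the drop in expected payment is $(1-\mu)b_i(x^{NR}-y^{NR}) - \bigl[(1-\mu)b_i - r\bigr](x^{R}-y^{R})$, which collapses to $r$ times a byte count only when the two decrements coincide. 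So even before the adaptive-padding issue you correctly flag in your last paragraph, the symmetric-byte computation already fails for ordinary non-adaptive deviations.

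The paper's proof sidesteps this by working with the Myerson integral representation of the expected \bks{} payment, treating a padding or delay policy as a modification of the entire allocation curve $w\mapsto x_i(w)$ rather than as a count of bytes. For padding, the key step is that the increment $\Delta(w)=x'_i(w)-x_i(w)$ satisfies $\Delta(w)\le\Delta(b_i)$ for all $w\le b_i$ (higher priority leaves more spare capacity in the padded epoch), from which $b_i\Delta(b_i)\ge\int\Delta(w)\,dw$ and hence nonnegative incremental payment follow immediately. For delay, the paper first precludes padding and then invokes incentive compatibility of Myerson payments directly: any allocation $x'\le x^*$ reachable by withholding demand is weakly dominated at the truthful bid. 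Your instinct about adaptive padding is good---neither argument is fully airtight on strategies that condition on inferred resampling---but the non-adaptive case already requires the integral machinery, not a symmetric byte count.
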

\begin{proof}
We first consider padding with fake traffic in some epochs. In
particular, consider the final epoch $t$ in which the buyer does this,
and for it to matter assume $d_t<c_t$. Fix any bid value $b_i$
(perhaps untruthful). We establish that this weakly increases the
buyer's expected payment without providing additional value (since it
is fake traffic.) The expected payment in \bks{} is equal
(through the self-sampling approach) to the
Myerson payment, and thus by the Myerson~\cite{Myerson81}
rule, we require
\begin{align*}
b_ix'_i(b_i)-\int_{0}^{b_i}wx_i'(w)dw\geq b_ix_i(b_i)-\int_{0}^{b_i}wx_i(w)dw,
\end{align*}
where $x'_i(w)$ is the total allocation given bid $w$ with the
padding in epoch $t$ and $x_i(w)$ is the total allocation given
bid $w$ without padding in epoch $t$. 
This inequality
holds since  $x'_i(b_i)-x_i(b_i)\geq x'_i(w)-x_i(w)$ for all $w\leq b_i$,
since for lower values the buyer may at some point receive a 
lower priority at $t$ and thus be able to send less incremental
traffic through padding, from which we have,
\begin{align*}
b_i(x'_i(b_i)-x_i(b_i))&\geq \int_{0}^{b_i}w\left(x'_i(w)-x_i(w)\right)dw.
\end{align*}

Second, consider holding back demand in some epochs such that not all the
capacity is utilized.  Now that padding of demand has been precluded, it follows
from monotonicity with respect to bid value, and the truthfulness of BKS, that
it is a weakly dominant strategy for a user to bid her true per-byte value.
Moreover, since the expected payment in BKS is equal to the payments in the
Myerson auction, then $v_i x^* − z^* \geq v_i x' − z'$ for all alternate
allocations $x' \leq x^*$ obtained through demand reduction, where $x^*$ is the
allocation achieved under the greedy policy, and $z^*$ and $z'$ the expected
payment in BKS at allocation $x^*$ and $x'$ respectively.
From this, it follows immediately that the buyer cannot benefit by holding back
demand.  \end{proof} 

On this basis, we conclude that the \bks{} mechanism has the ``truthful
in expectation'' property in that it supports both truthful bidding and
also straightforward revelation of demand (and use of the capacity
made available through \spq{} applied to perturbed bids).

\section{Sell-side incentives}
\label{sec:sell-side}

We have assumed so far that the seller and router device can be trusted; e.g.,
to follow \spq{} and \bks{} pricing.  However, a self-interested seller can do
better by deviating from the scheme as currently described. This is a concern
because it would also lead buy-side incentives to unravel.
For example, consider the following profitable seller manipulations, and the
failure of simple fixes: \smallskip

(1) The simplest profitable manipulation for sellers is to simply never give
rebates, justifying this by claiming that the randomized bid resampling in
\bks{} just happened to work out that way.  A natural fix to this is to insist
that the seller send the signed bids to a trusted central server for resampling.
The central server would now know whether a buyer should pay her bid or get a
rebate for a particular allocation, and since the server handles the accounting,
it could appropriately credit or debit the buyer's account when it learned the
final allocation.

(2) If the seller cannot tamper with the bid resampling process, it
can still manipulate by reducing the allocation to resampled bidders,
since they will be getting rebates and giving them service will reduce
the seller's revenue.  A partial fix to this is to note that to
implement \spq{}, the seller only needs to know the priority order of
bidders, and not their bids, so bids can be encrypted when sent to the
central server, which would report just the order to the router,
keeping the seller ignorant of the bids.

(3) Unfortunately, the seller can still infer enough from the order to
manipulate, even in a single allocation period.  We omit the math
here, but it can be shown that the expected revenue from the lowest
priority user is negative, so the seller would have an incentive to
reduce the user's allocation.

To preclude such manipulations, we propose a combination of
lightweight cryptographic methods and incentive engineering. In
particular, we take advantage of the presence of {\em multiple}
sellers in envisioned applications. Precisely, we need enough sellers
that share the same reserve price.
 
The approach, \mechname{}, extends the \bks{} scheme described so far,
using a trusted central server for accounting and validation, logically
inserting it between the buyer and seller. Buyers pay the center
directly, and the center pools the revenue across sellers in a
particular way, making payments to sellers that address incentive
concerns.

We do not assume that the central server can observe or enforce
anything about the sellers' routing policy.  We merely rely on it to
verify the cryptographic signatures on buyer bids to ensure that
sellers cannot tamper with bids, to keep track of account balances for
buyers and sellers, and to verify that bid perturbation is done
correctly. We explain the approach in the next section.

\subsection{The Align-Trust mechanism}

\bks{} specifies that sellers should route packets using a
highest-resampled-bid-first policy.  One way to think about the problem with
seller incentives in \bks{} is that prioritizing in favor of buyers with
resampled bids has negative value to sellers.  (Recall that a bid is only
resampled with some probability $\mu$.)  We fix this by first paying sellers the
perturbed bid value for data, rather than the rebate-adjusted, perturbed bid
amount.  From the seller's perspective, this makes the auction have ``first
price'' semantics, and aligns incentives. 

Because the \bks{} mechanism also pays rebates to buyers, just paying
the sellers based on perturbed bid amounts would leave the center with
a deficit. Instead, we compensate by taxing the sellers a percentage
of their revenue.  We ensure that sellers can only reduce their
individual tax rate by improving efficiency, so sellers maximize revenue by routing to
maximize efficiency (i.e., total value) with respect to the resampled
bids, which we establish is sufficient to retain incentives on the
buy-side of the market.

For the purpose of \mechname{}, we assume time is divided into \emph{accounting
periods}, perhaps a month long in practice. These periods encapsulate many
\bks{} style auctions, as various sellers provide bandwidth to buyers.

Recall that \bks{} allows a seller to adopt a reserve price to increase revenue.
To align trust on the sell-side, we insist that each seller selects a reserve
price  from a small set of reserve prices.

Based on this,  we can now consider the {\em pool} of sellers that select the
same reserve price. We apply the following system-wide payment mechanism for the
auctions involving these sellers:

\begin{definition}[\mechname{}]

Consider an accounting period, and a set of sellers $M$ with the same reserve
price.  Let $\Gamma(s)$ be the set of buyers served by seller $s$ during the
accounting period.

\begin{enumerate}
\item Charge each buyer the \bks{} payment $b_i x_i - R_i$ for each completed auction. Credit each seller $s \in M$ the {\em first-price} revenue at the perturbed bids, without including the rebates: $\sum_{i \in \Gamma(s)} x_i \tilde{b}_i$.

\item Randomly split the sellers in $M$ into two disjoint sets $S_1, S_2$ of
    equal size.  Let 
\[C_1 = \sum_{s \in S_1} \sum_{i \in \Gamma(s)} x_i (\tilde{b}_i - r)\]
denote the total credit above reserve price to sellers in $S_1$. Let 
\[T_1 = \sum_{s \in S_1} \sum_{i \in \Gamma(s)} (b_i - r) x_i - R_i\]
denote the total above-reserve payments from buyers associated with $S_1$.
Define $C_2$ and $T_2$ similarly for sellers in $S_2$.  Because the payments
from buyers include the \bks{} rebates, and the credits to sellers are at the
first-price resampled bids $\tilde{b}$, this leaves the center with a
\emph{deficit} $\delta = C-T$ for each set.  

\item To make up the deficit, the center will tax the sellers.  Define the {\em tax rate} for each set as:
$\mathit{tax}_1=\frac{\delta_2}{C_1},
\quad
\mathit{tax}_2=\frac{\delta_1}{C_2}.$
Collect $\mathit{tax}_1 C_1$ from
sellers in $S_1$, charging tax rate $\mathit{tax}_1$ uniformly across
all sellers.
Collect $\mathit{tax}_2 C_2$ from
sellers in $S_2$,  charging tax rate $\mathit{tax}_2$ uniformly across
all sellers.
\end{enumerate}
\end{definition}

This construction has several nice properties. First of all, 
as long as sellers continue to use \spq{} 
it has no
effect on buyer
incentives, because
\mechname{} is identical to \bks{} from the buyer's point of view.
In addition, we have:
\begin{lemma}
The total payment in \mechname{} is exactly balanced.
\end{lemma}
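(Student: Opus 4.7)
The plan is to verify budget balance by a direct accounting identity: decompose each monetary quantity into a reserve-price component (proportional to $\sum_i x_i$) and an above-reserve component, and check that total inflow (buyer payments plus seller taxes) equals total outflow (seller credits).

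First, I would use that $S_1$ and $S_2$ partition $M$ to rewrite the total credit to sellers as
\[
\sum_{s \in M} \sum_{i \in \Gamma(s)} x_i \tilde{b}_i
\;=\; r \sum_{i} x_i \;+\; (C_1 + C_2),
\]
by splitting $\tilde{b}_i = r + (\tilde{b}_i - r)$ in each summand. Similarly, for the total payment collected from buyers, writing $b_i = r + (b_i - r)$ gives
\[
\sum_{i} \bigl(b_i x_i - R_i\bigr) \;=\; r \sum_{i} x_i \;+\; (T_1 + T_2),
\]
where I am using that the rebates $R_i$ are absorbed into the above-reserve buckets $T_1$ and $T_2$ exactly as in the definition.

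Next, I would compute the total tax revenue collected in step 3 of \mechname{}. By the definitions $\mathit{tax}_1 = \delta_2 / C_1$ and $\mathit{tax}_2 = \delta_1 / C_2$, the total tax is $\mathit{tax}_1 C_1 + \mathit{tax}_2 C_2 = \delta_1 + \delta_2$. Substituting $\delta_k = C_k - T_k$ gives total tax $= (C_1 + C_2) - (T_1 + T_2)$.

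Finally, adding the buyer payments and the tax yields
\[
r \sum_i x_i + (T_1 + T_2) + (C_1 + C_2) - (T_1 + T_2) \;=\; r \sum_i x_i + C_1 + C_2,
\]
which exactly matches the total credit paid out to sellers. There is no real obstacle here; the argument is a bookkeeping identity, and the only subtlety is to notice that the cross-assignment of tax rates ($S_1$'s tax funds $S_2$'s deficit, and vice versa) is irrelevant to balance—what matters is only that the \emph{sum} of taxes equals the \emph{sum} of deficits. The cross-structure is what matters for incentive alignment on the sell-side, not for this lemma.
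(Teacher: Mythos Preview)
Your proof is correct and follows essentially the same approach as the paper: both separate the reserve-price component (which passes directly from buyers to sellers) from the above-reserve component, and then verify that the above-reserve books balance because $\mathit{tax}_1 C_1 + \mathit{tax}_2 C_2 = \delta_1 + \delta_2 = (C_1 + C_2) - (T_1 + T_2)$. Your write-up is simply more explicit about the decomposition, and your closing remark that the cross-assignment of tax rates is irrelevant to balance (and matters only for sell-side incentives) is a nice observation the paper leaves implicit.
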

\begin{proof}
The reserve price payments effectively go directly from buyers to sellers.  The total above-reserve payment to the system is 
$T_1 - C_1 + T_2 - C_2 + \mathit{tax}_1 C_1 + \mathit{tax}_2 C_2 = 
%T_1 - C_1 + T_2 - C_2 + C_2 - T_2 + C1 - T1 = 
0$
\end{proof}

A tax rate is admissible if it is no greater than one. This will be true with
high probability when no seller accounts for a large fraction of rebates, as
shown in the next lemma.

\begin{lemma} 
The probability that the tax rate to a seller is greater than 1 falls
exponentially quickly in the number of sellers, when the payments made to each
seller in step 1 are independently distributed according to one of a finite set
of distributions with bounded support.\footnote{These distributions correspond
to the types of situations in which sellers operate: occasionally sharing access
in cafes, running a permanent hotspot in an area without wi-fi infrastructure,
frequent sharing at a conference facility, etc.  It is reasonable to divide
sellers into a finite set of such environment types, with random variability
within each.}  We call this assumption about the payment distribution the
\emph{admissibility condition}.

\end{lemma}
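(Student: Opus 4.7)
The plan is to express the tax rate $\mathit{tax}_1 = \delta_2/C_1$ as a ratio of two sums of independent, bounded random variables indexed by sellers, then apply Hoeffding-type concentration to each sum separately, and argue that the ratio concentrates around a constant strictly less than $1$.

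First I would expand
\[
C_1 = \sum_{s \in S_1}\sum_{i \in \Gamma(s)} x_i(\tilde b_i - r),
\qquad
\delta_2 = C_2 - T_2 = \sum_{s \in S_2}\sum_{i \in \Gamma(s)}\left[R_i - x_i(b_i - \tilde b_i)\right],
\]
observing that the outer sums range over disjoint sets of sellers whose per-seller totals are, by the admissibility condition, independent random variables drawn from one of finitely many distributions with support in a fixed bounded interval $[-B,B]$.  Writing $n = |S_1| = |S_2| = |M|/2$ and setting $\bar c = \expectation[C_1]/n$, $\bar d = \expectation[\delta_2]/n$, both quantities are finite constants determined solely by the mixture proportions among seller types, and in particular are independent of $n$.

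Next, assuming the gap $\bar c - \bar d > 0$ (addressed below), fix any $\eta < (\bar c - \bar d)/2$.  Hoeffding's inequality applied separately to $C_1$ and to $\delta_2$ yields
$\Pr[C_1 < n(\bar c - \eta)] \le \exp(-\Omega(n))$
and
$\Pr[\delta_2 > n(\bar d + \eta)] \le \exp(-\Omega(n))$.
On the complement of the union of these two bad events, $C_1 > \delta_2$, and hence $\mathit{tax}_1 < 1$; a union bound therefore gives $\Pr[\mathit{tax}_1 > 1] \le \exp(-\Omega(|M|))$, as claimed.  An identical argument handles $\mathit{tax}_2$.

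The main obstacle is verifying the premise $\bar d < \bar c$, which per buyer reduces to showing $\expectation[R_i] < \expectation[x_i(b_i - r)]$.  Writing $\hat x_i(w)$ for the expected total allocation to buyer $i$ as a function of her bid $w$, the \bks{} unbiased-estimator property identifies $\expectation[R_i]$ with $\int_r^{b_i} \hat x_i(w)\,dw$, while $\expectation[x_i(b_i - r)] = (b_i - r)\hat x_i(b_i)$.  Monotonicity of $\hat x_i$, inherited from Lemma~\ref{lemma:monotonicity}, then gives the inequality, and the gap is strictly positive whenever higher bids sometimes yield strictly more traffic --- a property satisfied by any non-degenerate seller-type distribution in the admissible family.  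With $\bar c - \bar d$ thus bounded away from zero uniformly in $|M|$, the Hoeffding step above delivers the claimed exponential decay in the number of sellers.
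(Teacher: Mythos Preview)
Your argument is correct but follows a different route from the paper's. The paper does not compute $\bar c$ and $\bar d$ separately or invoke the Myerson integral identity at all; instead it exploits only the symmetry of the random split, namely that $E[X]=E[Y]$ where $X$ and $Y$ are the per-seller above-reserve credits in $S_1$ and $S_2$, so that $\overline{Y}-\overline{X}$ concentrates around zero by Hoeffding, and then observes that the event $\mathit{tax}_1\le 1$ is equivalent to $\overline{Y}-\overline{X}\le T_2/m$, which it treats as a positive threshold. Your approach replaces this symmetry argument with a direct verification that $E[\text{per-seller }C_1]>E[\text{per-seller }\delta_2]$ via the \bks{} unbiased-estimator property and allocation monotonicity, yielding a fixed gap $\bar c-\bar d>0$ against which to apply Hoeffding. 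The paper's route is lighter---it needs nothing about the structure of \bks{} payments beyond $T_2>0$---but is correspondingly sketchier, since $z=T_2/m$ is itself random and its positivity on every realization is not argued. Your route requires more machinery (the Myerson identity and monotonicity from Lemma~\ref{lemma:monotonicity}) but produces a deterministic, $n$-independent gap, making the Hoeffding application cleaner.
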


\begin{proof}    
Suppose for convenience that the pool contains $2m$ sellers,
and let $X_1,\ldots, X_m$ denote a random variable for the above-reserve credit
made in step 1 to each of the sellers in $S_1$. Similarly, let
$Y_1,\ldots,Y_m$ denote a random variable for the total above-reserve credit made
in step 1 to each of the sellers in $S_2$.

Let's consider
$\mathit{tax}_1$. For this to be bounded above by 1, we need
\[C_2-T_2 \leq C_1\]
Dividing through by $m$, and writing $C_2/m= \overline{Y}$ 
(where $\overline{Y}$ is the empirical mean),
$C_1/m=\overline{X}$ 
(where $\overline{X}$ is the empirical mean),
and with $T_2/m = z > 0$,
we want to bound the probability
$\Pr(\overline{Y}-\overline{X}\leq z)$.
For this, it is sufficient to bound the probability that
$$\overline{Y}-E[Y]\leq \frac{z}{2} \text{ and } E[X]-\overline{X}\leq \frac{z}{2},$$ 
since in this case we have $\overline{Y}-\overline{X} \leq z,$ since $E[X]=E[Y]$ by assumption. 
The probability $\Pr(\overline{Y}-E[Y]>\frac{z}{2})$, falls exponentially
quickly in $z$ and $m$ by Hoeffding's inequality, and similarly for
$\Pr(E[X]-\overline{X}>\frac{z}{2})$. This completes the proof.
\end{proof}

We assume that the admissibility condition holds for the theoretical results that follow.

In addition, from a seller's viewpoint, \mechname{} transforms the mechanism
into a first-price auction (with respect to perturbed bids) with a tax collected
on revenue, where the seller cannot usefully manipulate his tax rate:

\begin{lemma}
\label{lem:align}
If the tax rate for a seller is weakly less than one, the seller's
strict preference is to allocate in order to maximize the total
value given perturbed bids.
\end{lemma}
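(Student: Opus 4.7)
The plan is to fix an arbitrary seller $s$, assume $s \in S_1$ without loss of generality (the case $s \in S_2$ is symmetric), and express $s$'s net revenue in the accounting period as a function of $s$'s own allocation alone. The structural observation driving the proof is that $\delta_2 = C_2 - T_2$ depends only on sellers in the disjoint set $S_2$, and that $A := \sum_{s' \in S_1 \setminus \{s\}} \sum_{i \in \Gamma(s')} x_i(\tilde{b}_i - r)$ sums only over the other sellers in $S_1$; both are therefore constants from $s$'s perspective. Letting $V_s = \sum_{i \in \Gamma(s)} x_i \tilde{b}_i$, $R_s = \sum_{i \in \Gamma(s)} x_i$, and $y := V_s - r R_s \ge 0$ (using $\tilde{b}_i \ge r$ for every served buyer), the net revenue of $s$ under \mechname{} reduces to
\[
\mathrm{Rev}_s \;=\; V_s \;-\; \mathit{tax}_1 \cdot y \;=\; V_s \;-\; \frac{\delta_2\, y}{A + y}.
\]

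Next I would run a local-improvement argument. Call an allocation by $s$ \emph{value-maximizing} if it greedily fills available capacity in decreasing order of $\tilde{b}_i$. From any non-value-maximizing allocation, at least one of two strictly improving moves is available: (i) assigning one idle byte of capacity to some buyer $i$ with $\tilde{b}_i \ge r$ and unmet demand, or (ii) reallocating one byte from a served buyer $j$ to a higher-bid buyer $i$ with unmet demand ($\tilde{b}_i > \tilde{b}_j$). Using the identity $\tfrac{y}{A+y} = 1 - \tfrac{A}{A+y}$, the revenue change under either move takes the clean form
\[
\Delta \mathrm{Rev}_s \;=\; \Delta V_s \;-\; \frac{\delta_2\, A\, \Delta y}{(A + y)(A + y')},
\]
where $y, y'$ are the before/after values of $V_s - r R_s$. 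The hypothesis $\mathit{tax}_1 \le 1$ (equivalently, $\delta_2 \le A + y$), combined with the easy bound $\Delta V_s \ge \Delta y$ (equal in case (ii); larger by $r$ in case (i)), yields after a one-step algebraic reduction that $\Delta \mathrm{Rev}_s \ge 0$, with strict inequality under mild genericity ($y' > 0$ or $r > 0$). Iterating the two moves drives $s$ to a value-maximizing allocation with a strictly positive revenue gain at each step, which establishes the strict preference.

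The main obstacle I anticipate is the nonlinearity introduced by the tax rate's dependence on $s$'s own choice: because $\mathit{tax}_1 = \delta_2/(A+y)$ shrinks as $y$ grows, $\mathrm{Rev}_s$ is not a linear functional of the allocation, and one cannot simply argue ``higher $V_s$ is better'' directly. The algebraic crux is the one-line identity for $\Delta \mathrm{Rev}_s$ above, which makes transparent that $\mathit{tax}_1 \le 1$ is precisely the condition needed to ensure that the marginal tax never outweighs the marginal credit from any value-improving move.
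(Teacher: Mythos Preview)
Your argument is correct and rests on the same structural observation as the paper's proof: $\delta_2$ depends only on sellers in $S_2$, and $C_1$ decomposes as $A + y$ with $A$ outside $s$'s control, so $s$ influences net revenue only through its own above-reserve credit. The paper's proof is far terser---it simply remarks that increasing $s$'s revenue lowers the tax rate $\delta_2/C_1$ and is therefore ``doubly good'' when that rate is at most one---and does not carry out the explicit local-improvement calculus or the $\Delta\mathrm{Rev}_s$ identity you develop.
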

\begin{proof}
Consider a seller $s$ in $S_1$. The tax rate for $s$ is $\frac{\delta_2}{C_1}$.
$\delta_2$ depends only on sellers in $S_2$, and $C_1$ is the sum of the
revenues of other sellers in $S_1$, which $s$ cannot affect, and the total
revenue for $s$.  Increasing revenue lowers the tax rate facing $s$, so as long
as the tax rate is less than 1, this is doubly good for $s$.\footnote{In
practice, it would make sense to cap the tax rate at 1, putting the risk on the
center instead of the sellers.}
\end{proof}

\subsection{Seller Incentives}

Under \mechname{}, seller incentives are aligned with using a routing policy
that maximizes total pre-tax revenue $\sum_{i} \tilde{b}_i x_i$. This follows
from Lemma~\ref{lem:align}. In this
section, we examine the effect that this has on whether sellers want to follow
\spq{}, and on the effect of potential deviations on buy-side incentives.  

First, we characterize situations where sellers cannot profit by deviating from
\spq{}.  Second, we look at situations when the seller may profit from using a
different routing policy, and show that such deviations improve efficiency with
respect to perturbed bids and preserve incentive alignment with truthful bidding
for buyers. The crucial property that we need to retain under seller deviations
is that of monotonicity, or a relaxed form of expected monotonicity.

\subsubsection{Demand models where \spq{} is optimal}

We first consider cases where the seller's selfish preference is to follow
\spq{}. The intuition is that deviating results in an immediate drop in revenue
from sending lower priority traffic ahead of higher priority traffic, so in
order for this to increase revenue, the seller has to expect to make up the lost
revenue later. If this is not possible under a given demand model then
\spq{}is optimal for the seller. In the following, we consider  {\em ex post
efficiency}, which requires that an allocation rule maximizes total realized
value whatever the bids and whatever the realized demand. The following lemma
follows from Lemma~\ref{lem:align}, given admissibility.

\begin{lemma}
\label{lemma:efficient-nonmanip}
If \spq{} is {\em ex post} efficient (with respect to perturbed bids), 
the seller cannot increase his revenue by deviating from \spq{}.
\end{lemma}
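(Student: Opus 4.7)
The plan is to observe that this lemma is essentially a direct corollary of Lemma~\ref{lem:align} once we unpack what \emph{ex post} efficiency with respect to perturbed bids means. First I would recall that, by Lemma~\ref{lem:align} and the admissibility assumption, the seller's preferred routing policy is one that maximizes pre-tax revenue $\sum_i \tilde{b}_i x_i$, i.e., total value with respect to the perturbed bids $\tilde{b}$. Crucially, this quantity is the only handle the seller has on its payoff, since the tax rate it faces depends solely on the $\delta_2/C_1$ (or $\delta_1/C_2$) ratio, whose dependence on this seller's choices is exactly through $C_1$ (resp.\ $C_2$).

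Next I would invoke the hypothesis: \spq{} is \emph{ex post} efficient with respect to perturbed bids, meaning that for every realization of demand and every bid vector $\tilde{b}$, the allocation produced by \spq{} maximizes $\sum_i \tilde{b}_i x_i$ among all feasible allocations respecting capacity and demand. Therefore, for any alternative routing policy the seller might deploy, the resulting allocation $x'$ satisfies $\sum_i \tilde{b}_i x'_i \le \sum_i \tilde{b}_i x_i^{\textsc{spq}}$ in every realization, and hence also in expectation.

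Combining these two observations gives the result: the seller's revenue under any deviation is weakly dominated by the revenue under \spq{}, so no deviation can strictly increase revenue. I would conclude by noting that this holds regardless of whether the buyers are bidding truthfully, since the analysis is a pointwise statement about allocations at a fixed perturbed-bid vector.

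The main ``obstacle'' is essentially bookkeeping rather than substance: one should be careful that the seller's deviation is only to the routing policy (the perturbation $\tilde{b}$ is performed by the central server and cannot be manipulated), so the comparison is genuinely between two allocations computed against the same $\tilde{b}$. Given that, the argument is a two-line composition of Lemma~\ref{lem:align} with the stated efficiency hypothesis.
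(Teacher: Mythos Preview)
Your proposal is correct and matches the paper's approach exactly: the paper simply states that the lemma ``follows from Lemma~\ref{lem:align}, given admissibility,'' which is precisely the composition you spell out. Your write-up is more detailed than the paper's one-line justification, but the underlying argument is identical.
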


We now specialize the natural demand models to understand when
\spq{} will be ex post efficient. A sufficient property is that the
demand model be memoryless, so that it does not depend on the routing
policy, with demand invariant to the total allocation made so far:
$d_i(t,x)=d_i(t,y)$ for all $x,y$ and for all realizations of a user's
random demand model. \spq{} is greedy in regard to bid, and if there
is no impact on future demand from a deviation from myopic value
maximization in the current period then \spq{} will maximize realized
value, so applying Lemma~\ref{lemma:efficient-nonmanip}
gives the following:
\begin{theorem}
If the user demand models are memoryless, then the seller maximizes revenue
by using \spq{}, and the \bks{} mechanism combined with \mechname{} 
retains truthfulness-in-expectation for buyers despite
seller self-interest.
\end{theorem}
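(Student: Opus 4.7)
The plan is to reduce the theorem to results already established. The key technical step is to show that \spq{} is \emph{ex post} efficient with respect to the perturbed bids $\tilde{b}$ under memoryless demand; the two conclusions of the theorem then follow directly from Lemma~\ref{lemma:efficient-nonmanip} and from the buyer-side analysis of Section~\ref{sec:buyer-truthfulness}.

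For ex post efficiency of \spq{}, I would argue that memoryless demand decouples the allocation problem across epochs. Under memoryless demand, in every realization $d_i(t,x)$ does not depend on $x$, so after fixing all random draws the demand $d_i(t)$ in each epoch is a fixed function of $t$ alone, independent of any past allocation choices. Hence the total value $\sum_t \sum_i \tilde{b}_i x_{i,t}$ decomposes additively across epochs with no intertemporal coupling, and value-maximization reduces to the per-epoch problem of choosing $x_{i,t}\leq d_i(t)$ with $\sum_i x_{i,t}\leq c$ so as to maximize $\sum_i \tilde{b}_i x_{i,t}$. The greedy solution to this linear packing problem is to fill capacity in descending order of $\tilde{b}_i$, which is exactly what \spq{} does. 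Thus \spq{} achieves the maximum realized value in every realization of demand.

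Given ex post efficiency, Lemma~\ref{lemma:efficient-nonmanip} immediately yields the first conclusion: a seller cannot strictly improve revenue by deviating from \spq{}. For the second conclusion, note that memoryless demand is a special case of natural demand, since $d_i(t,x)=d_i(t,x')$ trivially satisfies the inequality in Definition~\ref{def:natural}. Because sellers therefore adhere to \spq{}, Lemma~\ref{lemma:monotonicity} gives \emph{ex post} monotonicity of the buyer's allocation in her bid, and Theorem~\ref{thm:bks-truthfulness} then implies that the \bks{} procedure is truthful-in-expectation. Finally, \mechname{} is identical to \bks{} from the buyer's point of view (buyers are charged $b_i x_i - R_i$ regardless of how the center redistributes revenue among sellers), so truthfulness-in-expectation carries through to the combined mechanism.

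The main obstacle, which is largely conceptual, is the per-epoch decoupling argument: one has to verify not only that each single-period optimization is solved greedily by \spq{} but also that no intertemporal seller deviation can generate strictly larger total perturbed-bid value in some realization. The memoryless assumption is precisely what rules out such coupling through future demand, and the greedy-policy dominant strategy of buyers (established in Section~\ref{sec:buyer-truthfulness}) rules out coupling through buyer behavior. Once this is observed, the remainder of the proof is a straightforward invocation of prior lemmas.
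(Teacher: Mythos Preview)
Your proposal is correct and follows essentially the same approach as the paper: argue that memoryless demand eliminates intertemporal coupling so that \spq{}, being greedy per epoch, is \emph{ex post} efficient with respect to perturbed bids, then invoke Lemma~\ref{lemma:efficient-nonmanip} for the seller side and the buyer-side results of Section~\ref{sec:buyer-truthfulness} (together with the observation that memoryless demand is natural) for truthfulness. Your write-up is in fact more explicit than the paper's, which compresses the argument to a single sentence before the theorem statement and a one-line remark after it.
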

Note that memoryless demand models are natural in the sense of our earlier definition.

\subsubsection{Other Demand Models}

The seller can benefit by deviating from \spq{} for some natural demand models.
For example, consider the model of an impatient buyer, as in the example in
Section~\ref{sec:natural-examples}. Such a buyer will stop transmitting if it
does not get some minimum amount of traffic by some time.  If the seller knows
this, it can sometimes increase revenue by increasing the buyer's priority,
increasing her allocation over the minimum, and ensuring continuing transmission
and continuing revenue.

We first consider the extreme case in which the seller can form a perfect
prediction of the future demand and bid values.
Let $\Delta$ denote a set of routing policies considered by
the seller, and let $x_i(\rho)$ be the total allocation to $i$ under
routing policy $\rho$, given the realized demand of all buyers.
\begin{lemma}
\label{lem:opt-is-monotone}
If the seller implements the optimal routing policy in $\Delta$, maximizing
\[\argmax{\rho \in \Delta} \sum_{i} \tilde{b}_i x_i(\rho),\]
with respect to perturbed bids $\tilde{b}$, then the implied allocation rule is {\em ex post} monotone and
the resulting mechanism is truthful-in-expectation for buyers.
\end{lemma}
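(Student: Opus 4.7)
The plan is to prove the lemma in two steps: (1) show that the induced allocation rule is ex post monotone with respect to the perturbed bids $\tilde{b}$, and (2) invoke Theorem~\ref{thm:bks-truthfulness} to convert monotonicity into truthfulness-in-expectation with respect to the submitted bids.

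For step (1), I would use the classical exchange argument for welfare-maximizing allocation rules. Fix the realized demand functions and fix the perturbed bids $\tilde{b}_{-i}$ of all buyers other than $i$. Consider two candidate perturbed bids $\tilde{b}'_i < \tilde{b}''_i$ for buyer $i$, and let $\rho'$ and $\rho''$ be the routing policies selected by the seller's $\argmax$ oracle in the two cases. Optimality at $\tilde{b}'_i$ gives
\[\tilde{b}'_i x_i(\rho') + \sum_{j \neq i} \tilde{b}_j x_j(\rho') \;\geq\; \tilde{b}'_i x_i(\rho'') + \sum_{j \neq i} \tilde{b}_j x_j(\rho''),\]
and optimality at $\tilde{b}''_i$ gives the analogous inequality with $\rho''$ preferred. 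Adding the two inequalities causes every term involving $j \neq i$ to cancel (their perturbed bids are unchanged), leaving
\[(\tilde{b}''_i - \tilde{b}'_i)\bigl(x_i(\rho'') - x_i(\rho')\bigr) \;\geq\; 0,\]
and hence $x_i(\rho'') \geq x_i(\rho')$ since $\tilde{b}''_i > \tilde{b}'_i$. This is precisely ex post monotonicity of the allocation rule $\A_i(\tilde{b}, d)$ with respect to the $i$-th perturbed bid, at fixed realized demand.

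For step (2), I would simply apply Theorem~\ref{thm:bks-truthfulness}: since the allocation rule driving the \bks{} procedure is ex post monotone in the perturbed bids, wrapping it in the \bks{} self-resampling and rebate machinery yields a truthful-in-expectation mechanism for buyers with respect to their submitted bids. No additional argument about seller incentives is needed here, because Lemma~\ref{lem:align} and the hypothesis of the present lemma already reduce seller behavior to maximizing the perturbed-bid weighted value.

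The only subtlety I anticipate is tie-breaking in the $\argmax$: when multiple policies in $\Delta$ achieve the same weighted value, the exchange inequality becomes weak, but any tie-breaking rule whose distribution does not depend on $\tilde{b}_i$ (for example a fixed lexicographic rule over $\Delta$, or independent coin flips) still yields weak monotonicity, which is all Theorem~\ref{thm:bks-truthfulness} requires. I do not expect this to pose a genuine obstacle, since the seller is free to choose any tie-breaking convention, and bid-independent ones suffice.
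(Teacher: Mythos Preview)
Your proposal is correct and follows essentially the same route as the paper: both use the classical exchange argument on the two optimality inequalities at $\tilde{b}'_i$ and $\tilde{b}''_i$, add them to cancel the $j\neq i$ terms, and conclude $x_i(\rho'')\geq x_i(\rho')$, then invoke Theorem~\ref{thm:bks-truthfulness} for truthfulness-in-expectation. Your explicit remark on tie-breaking is a small addition the paper omits, but otherwise the arguments coincide.
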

\begin{proof}
The optimal allocation rule is {\em ex post} monotone. To see this,
consider the optimal allocation $x=(x_1,\ldots,x_n)$, denoting
the total amount of capacity used by each buyer by her departure,
based on perturbed bids $\tilde{b}=(\tilde{b}_1,\ldots,\tilde{b}_n)$.
Consider an alternate (perturbed) 
value $\tilde{b}'_i>\tilde{b}_i$, and new optimal
allocation $x'$. Since both $x$ and $x'$ are available
at bids $\tilde{b}$ and $(\tilde{b}'_i,\tilde{b}_{-i})$, 
then by optimality we have:
\begin{align*}
\tilde{b}_ix_i+\tilde{B}_{-i}(x_{-i})&\geq \tilde{b}_i x_i'+\tilde{B}_{-i}(x_{-i}')\\
\tilde{b}_i'x_i'+\tilde{B}_{-i}(x_{-i}')&\geq \tilde{b}_i'x_i+\tilde{B}_{-i}(x_{-i}),
\end{align*}
where $\tilde{B}_{-i}(x_{-i})$ is the total (perturbed)
bid for the allocation $x_{-i}$
to all buyers except i.
Adding and collecting terms, we have
\begin{align*}
\tilde{b}'_i(x'_i-x_i)\geq \tilde{b}_i(x'_i-x_i),
\end{align*}
and since $\tilde{b}'_i>\tilde{b}_i$ we need $x'_i\geq x_i$. This is the condition
for {\em ex post} monotonicity, and completes the proof.
\end{proof}

This illustrates the basic way in which the use of \mechname{} aligns
incentives for sellers so that they act in a way that retains buy-side
truthfulness.

The assumption that the seller is computing the {\em ex
 post} optimal allocation with respect to realized demand and
realized bids is very strong, and can be relaxed. 
Rather than ex post monotone, a weaker property of {\em monotone-in-expectation}
requires the expected total amount of bandwidth used by a buyer to be weakly
increasing with bid value. The expectation is taken with respect to a
probabilistic model of demand and bid values and also considering the random
perturbation of bids in \bks{}.

The dominant-strategy equilibrium property for buyers no longer holds for this
seller deviation, because the optimality of the seller's policy holds only in
expectation, given distributional assumptions about other buyers. In its place,
we adopt the standard notion of {\em Bayes-Nash incentive compatibility}, which
in our case requires that truthful bidding and following the greedy policy
maximizes a buyer's expected utility, given that other buyers do the same.

\begin{theorem}
\label{thm:estimate-is-monotone}
If the seller implements a routing policy that is optimal in expectation,
solving
\[\argmax{\rho \in \Delta} \sum_{i} E\!\left[\tilde{b}_i x_{i,\geq t}(\rho)\right],\]
forward from any time $t$, where $x_{i,\geq t}(\rho)$ is the total additional
allocation (bytes) to $i$ between $t$ and the departure time of $i$, 
 given perturbed bids $\tilde{b}$, then
the resulting mechanism is Bayes-Nash incentive compatible.
\end{theorem}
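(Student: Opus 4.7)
The plan is to replay the revealed-preference argument of Lemma~\ref{lem:opt-is-monotone} in expectation rather than pointwise, thereby deriving monotonicity-in-expectation of buyer $i$'s interim allocation, and then to invoke the BKS self-sampling identity to conclude Bayes-Nash incentive compatibility. Throughout, I would condition on a common prior over the other buyers' perturbed bids and over the demand realizations, write $\theta_{-i}$ for this joint object, and reason from buyer $i$'s interim perspective.

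For the monotonicity step, let $\rho^*(\tilde b_i, \tilde b_{-i})$ denote the seller's expected-optimal policy and write $\bar x_i(\tilde b_i) = E[x_i(\rho^*(\tilde b_i, \tilde b_{-i}))]$ for the induced interim expected allocation. For $\tilde b_i < \tilde b_i'$, expected-optimality of $\rho^*(\tilde b_i)$ at bid $\tilde b_i$ and of $\rho^*(\tilde b_i')$ at bid $\tilde b_i'$ yields two inequalities on the seller's objective $\tilde b_i E[x_i(\rho)] + \sum_{j\neq i} E[\tilde b_j x_j(\rho)]$. Adding them, the $j\neq i$ terms cancel verbatim because the two objectives differ only in $i$'s coefficient, and rearranging gives $(\tilde b_i' - \tilde b_i)(\bar x_i(\tilde b_i') - \bar x_i(\tilde b_i)) \geq 0$, so $\bar x_i$ is weakly monotone in $\tilde b_i$.

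To lift this through BKS, I would fix any $\theta_{-i}$: the BKS construction ensures that, in expectation over buyer $i$'s resampling coin alone, her payment equals $\int_r^{b_i} x_i(w, \theta_{-i})\, dw$, a purely computational identity that does not rely on monotonicity. Taking a further expectation over $\theta_{-i}$, the interim expected payment given report $b_i$ equals $\int_r^{b_i} \bar x_i(w)\, dw$, which is precisely the Myerson payment for the monotone interim rule $\bar x_i$. The standard envelope calculation then shows that truthful bidding $b_i = v_i$ maximizes the interim expected utility $v_i \bar x_i(b_i) - \int_r^{b_i} \bar x_i(w)\, dw$, conditional on the other buyers bidding truthfully (which is the prior used to define $\bar x_i$). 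Padding and delay deviations are ruled out exactly as in the theorem of Section~\ref{sec:buyer-truthfulness}, with the ex post inequality $x_i'(b_i) - x_i(b_i) \geq x_i'(w) - x_i(w)$ replaced by its interim analogue, derivable from the same revealed-preference argument applied to the seller's expected-optimal policy under the padded versus unpadded demand profiles.

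The main obstacle I anticipate is the second step: separating the two sources of randomness cleanly so that BKS's per-instance payment identity---originally invoked for an ex post monotone allocation rule---still delivers the Myerson payment for the interim rule when monotonicity only holds after averaging. The resolution is that the resampling coin of buyer $i$ is independent of $\theta_{-i}$, which allows a Fubini-style reordering: BKS operates at the innermost level (where monotonicity is not needed for the identity itself) and the Myerson envelope operates at the outermost level (where monotonicity-in-expectation of $\bar x_i$ is exactly the sufficient condition).
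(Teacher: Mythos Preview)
Your proposal is correct and follows essentially the same approach as the paper: derive monotonicity-in-expectation via the revealed-preference argument of Lemma~\ref{lem:opt-is-monotone} applied to the seller's expected-optimal objective, then invoke the \bks{} payment identity to conclude Bayes-Nash incentive compatibility. Your treatment is in fact more careful than the paper's own proof, which simply asserts that ``standard \bks{} incentive arguments'' go through once monotone-in-expectation is established; you make explicit the Fubini step separating buyer $i$'s resampling coin from $\theta_{-i}$ and also address the padding/delay deviations, neither of which the paper spells out in this theorem.
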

\begin{proof}
Consider buyer $i$ and arrival period $t=\alpha_i$. By the expected
optimality of the routing policy,
(where 
the expectation is taken with respect to the probabilistic
demand model of buyers and a distribution on perturbed bid values,
itself induced by the \bks{} randomization and a distribution
on buyer values), we have
\begin{align*}
\tilde{b}_i z_i + \tilde{B}_{-i}(z_{-i,\geq t}) &\geq \tilde{b}_i z'_i + \tilde{B}_{-i}(z'_{-i,\geq t})\\
\tilde{b}'_i z'_i+\tilde{B}_{-i}(z'_{-i,\geq t})&\geq \tilde{b}_i'z_i + \tilde{B}_{-i}(z_{-i,\geq t}),
\end{align*}
where $z_i$ is the expected allocation to buyer $i$ forward
from $t$ given perturbed bid  $\tilde{b}_i$, $z_{-i,\geq t}$ the expected
allocation to other buyers forward from $t$, $\tilde{B}_{-i}(\cdot)$ denotes
the expected value from this allocation given realized, perturbed bids
reported so far and the distribution on future (perturbed) bids, 
$\tilde{b}'_i>\tilde{b}_i$ is an alternate bid, and $z'_i$ and $z'_{-i,\geq t}$ 
denote the respective allocation quantities under this
alternate bid. 

Proceeding in the same way as Lemma~\ref{lem:opt-is-monotone},
we can add and collect terms and obtain
\begin{align*}
\tilde{b}'_i(z'_i-z_i)&\geq \tilde{b}_i(z'_i-z_i),
\end{align*}
and conclude since $\tilde{b}'_i>\tilde{b}_i$ that
 $z'_i\geq z_i$. This is the condition
for monotone-in-expectation. From this, it then follows from the standard
\bks{} incentive arguments (that in turn rely on monotonicity)
that the mechanism aligns incentives for buyers,
with Bayes-Nash equilibrium adopted in place of truthful-in-expectation 
since monotonicity relies
on the other buyers following
the equilibrium, so that the seller's probabilistic model about
the world is correct in regard to demand and bids.
\end{proof}

The weakening from truthfulness-in-expectation to Bayes-Nash
equilibrium arises because the seller's policy is only
optimal-in-expectation, and this in turn relies on the seller having a
correct probabilistic model of buyer demand--- and thus equilibrium behavior by
buyers.

{\bf Example 3.} 
{\em An example to illustrate this idea: imagine a case
where the seller knows that a buyer is impatient, with constant demand
10 in each round for the first 60 rounds, leaving after that if they
do not get a minimum amount $m$, drawn uniformly from $[300,450]$.
There is no uncertainty about this buyer's demand, and the seller
computes his best estimates of the other buyers' future demand.  Using
these estimates, the seller can compute the probability that this
buyer will reach each amount between 300 and 450 under \spq{} routing.
Then it can consider switching the routing order in some rounds to
increase the buyer's allocation, and consider how much that increases
the likelihood of the buyer making it to her (unknown) $m$ and the
likely future gains from that, versus the immediate revenue reduction
such a deviation requires.  The intuition for why this is still
monotone should be clear--- if the buyer's traffic is worth more,
there is a smaller revenue reduction from increasing her priority, and a
larger potential future payoff, so the seller would be more likely to
increase the buyer's allocation.}
\smallskip

Buyers do not need to
know whether the seller is using \spq{} or an expected-efficiency
maximizing policy, since \bks{} remains monotone, and thus incentives
are aligned with truthful bidding either way.  The buyer only needs to
believe that the seller is not irrationally reducing his expected
revenue by changing priorities away from \spq{}. 
Another useful property is that these results hold for any set of
routing policies $\Delta$ considered by the router--- as long as the
router always chooses the best policy (in expectation) from a fixed
set, we still get monotonicity, and thus buyer truthfulness.

\section{Simulation Results}

We now present simulations that
explore the quantitative difference between mechanisms, and illustrate
the effects predicted by our theoretical analysis.  We confirm that \bks{} achieves
arbitrarily close approximations to allocative efficiency for simple
demand models, show that reserve prices can increase efficiency of
non-prioritized routing methods, demonstrate a scenario where sellers
have practical efficiency-improving deviations, and examine the
distributional effects of revenue pooling.

\subsection{Our simulator}

We adopt a custom flow-level simulator, closely matching our theoretical model.
Each epoch is one second, and we run simulations for 10 minutes, or 600 epochs.
Demand and capacity are specified in KBps.  The simulator supports arbitrary
user demand models.  We use a stochastic demand model motivated by patterns
observed in real networks, somewhat simplifying the trace-based model described
in~\cite{papadopouli05demand}. A user's demand is composed of \emph{flows} with
total demand equal to the sum of the demands of the flows active at any time,
flow arrival times are a Poisson process, flow duration is sampled from a
lognormal distribution, and each flow has Poisson demand. 

We use the following parameters: flow durations have both mean and standard
deviation 30 seconds, average flow inter-arrival times are 30 seconds, and each
flow has average rate of either 10 or 30 KBps, as specified. This results in
average demand equal to the average flow rate.

\begin{figure}[t]
\begin{center}
\includegraphics[width=0.65\linewidth]{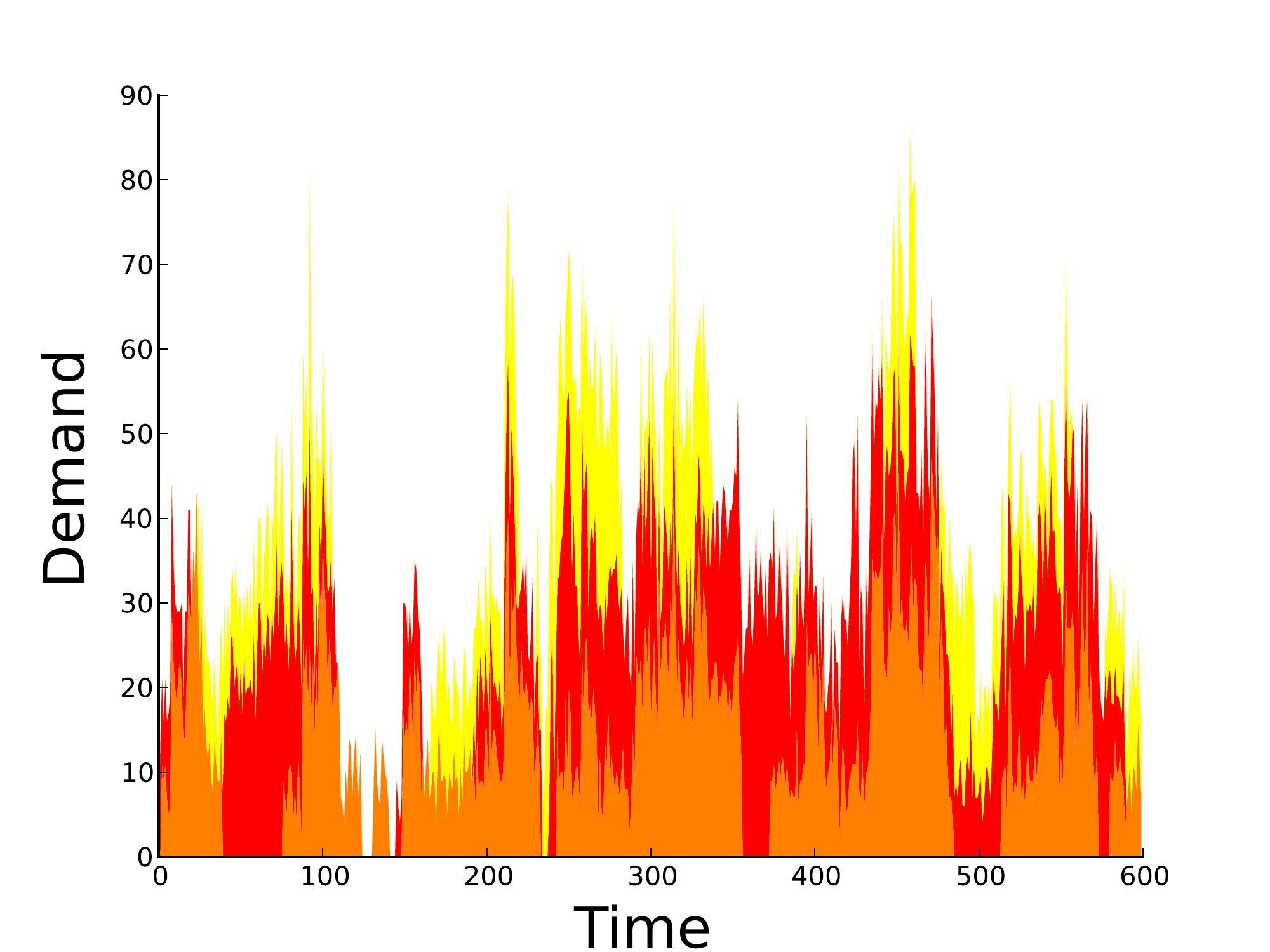}
\caption{Sample demand realization for 3 buyers with our trace-inspired model
for 600 simulated secs.  Each color is a buyer's demand. 
} 
\label{fig:demand-trace} 
\end{center}
\end{figure}

Fig.~\ref{fig:demand-trace} shows an example demand trace of three such buyers, each with demand 10 KBps.  The pattern is bursty both at short timescales, reflecting the Poisson demand of each flow, and at longer timescales, reflecting the random flow arrival process.

The simulator supports the \spq{}, \fq{}, and \fifo{} routing policies described
in Section~\ref{sec:policies}.  Unless otherwise specified, we simulate \vmm{}
with \spq{} routing, \bks{} with \spq{} routing with priorities based on
resampled bids, and fixed price with \fq{} and \fifo{} routing.  The \bks{}
resampling frequency $\mu$ is 0.2, chosen as a compromise between reducing the
magnitude of rebates and not sacrificing efficiency.  The results are not very
sensitive to the choice of $\mu$.  We show the expected allocation over 1000
allocations for \bks{}.  

We simulate truthful bidding for both \bks{} and \vmm{}, even though
\vmm{} is not a truthful mechanism.  For memoryless demand models,
this means that \vmm{} results in the optimal allocation, acting as a
benchmark for the truthful mechanisms.

\subsection{Efficiency}

From our analysis, we know that \spq{} is efficient for
memoryless demand models, but may not be efficient for other demand
models.  Here, we look at both cases, and look at the magnitude of
this effect in some scenarios.  We also examine the drop in efficiency
due to bid resampling in \bks{}.

\begin{figure}[t]
\begin{center}
\includegraphics[width=0.65\linewidth]{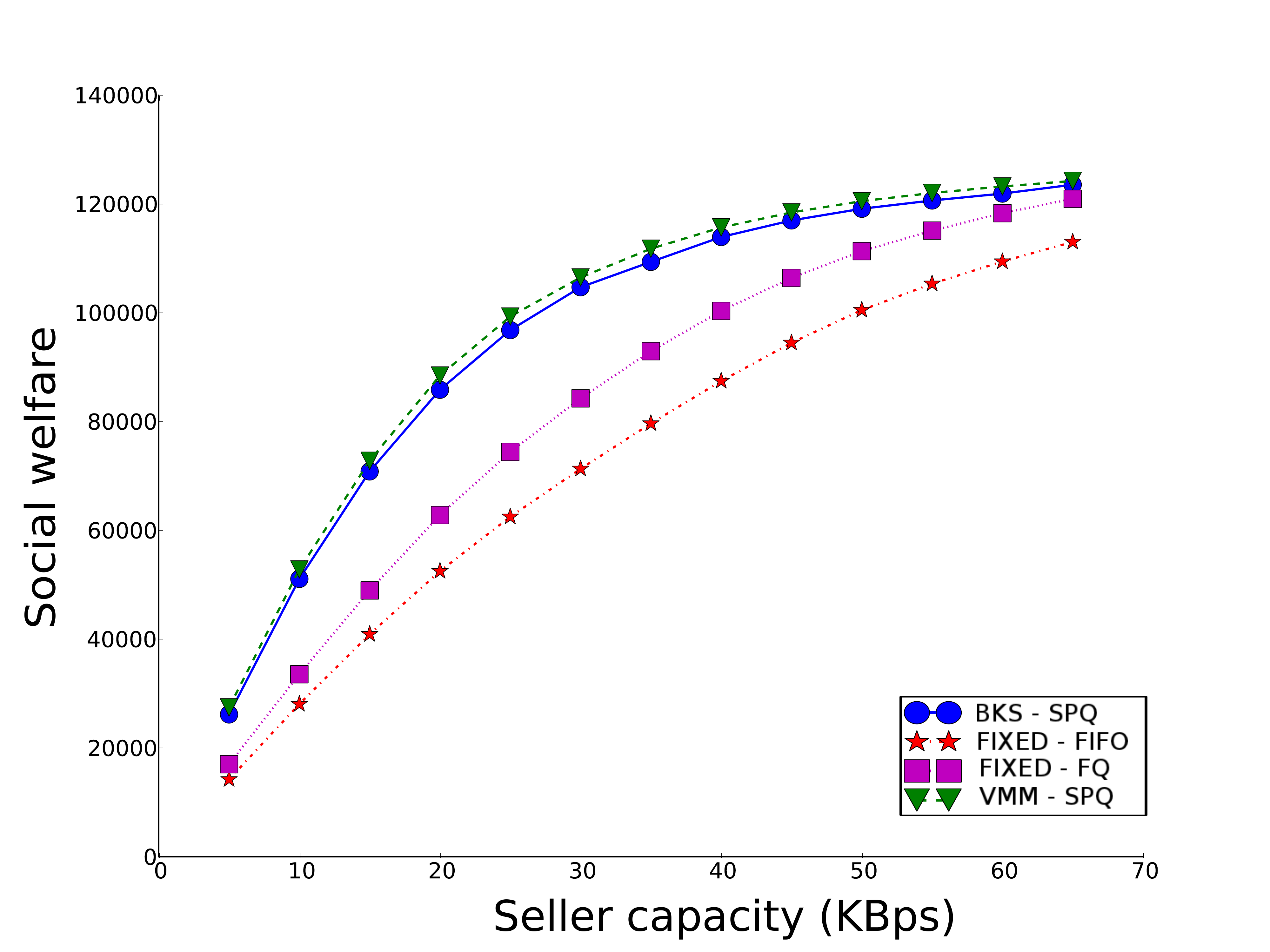}
\caption{Efficiency of routing policies with truthful bidding. The \vmm{} allocation is optimal in this setting, so the drop to \bks{} is the efficiency loss due to bid resampling necessary to obtain truthfulness.  \spq{} routing for \vmm{} and \bks{} is much more efficient than the non-prioritized routing policies.}
\label{fig:efficiency} 
\end{center}
\end{figure}

Fig.~\ref{fig:efficiency} shows the social welfare (sum of buyer and seller
utilities) for a scenario with three buyers.  The first two have per-KB values
10 and 4, and average demand 10 KBps each, simulating a high value and a medium
value user.  The last buyer has a per-KB value of only 1, but has a higher
average demand--30 KBps.   All buyers use the stochastic demand model described
above, and we compare  four different mechanisms--\bks{} and \vmm{}, both with
\spq{} routing and truthful bids, and \fq{} and \fifo{}, with a fixed price of
1.  (We study the effect of changing this fixed price in the next experiment).
As the seller capacity increases, the social welfare increases for all the
mechanisms, until all the demand is satisfied.  The demand models in this
simulation are memoryless, so \spq{} routing for \vmm{} is
optimal given that we insist on truthful bidding.  

The small difference between the performance of \vmm{} and \bks{} is
the efficiency cost of the sampling necessary for the
truthfulness provided by \bks{}.  \spq{} routing for \bks{} is much
more efficient than either of the non-prioritized fixed price
policies.  The difference between \fq{} and \fifo{} is also
interesting--with \fifo{}, the allocation is proportional to demand,
so the low value, high-demand buyer gets three-fifth of the channel on
average, reducing efficiency.

\begin{figure}[t]
\begin{center}
\includegraphics[width=0.65\linewidth]{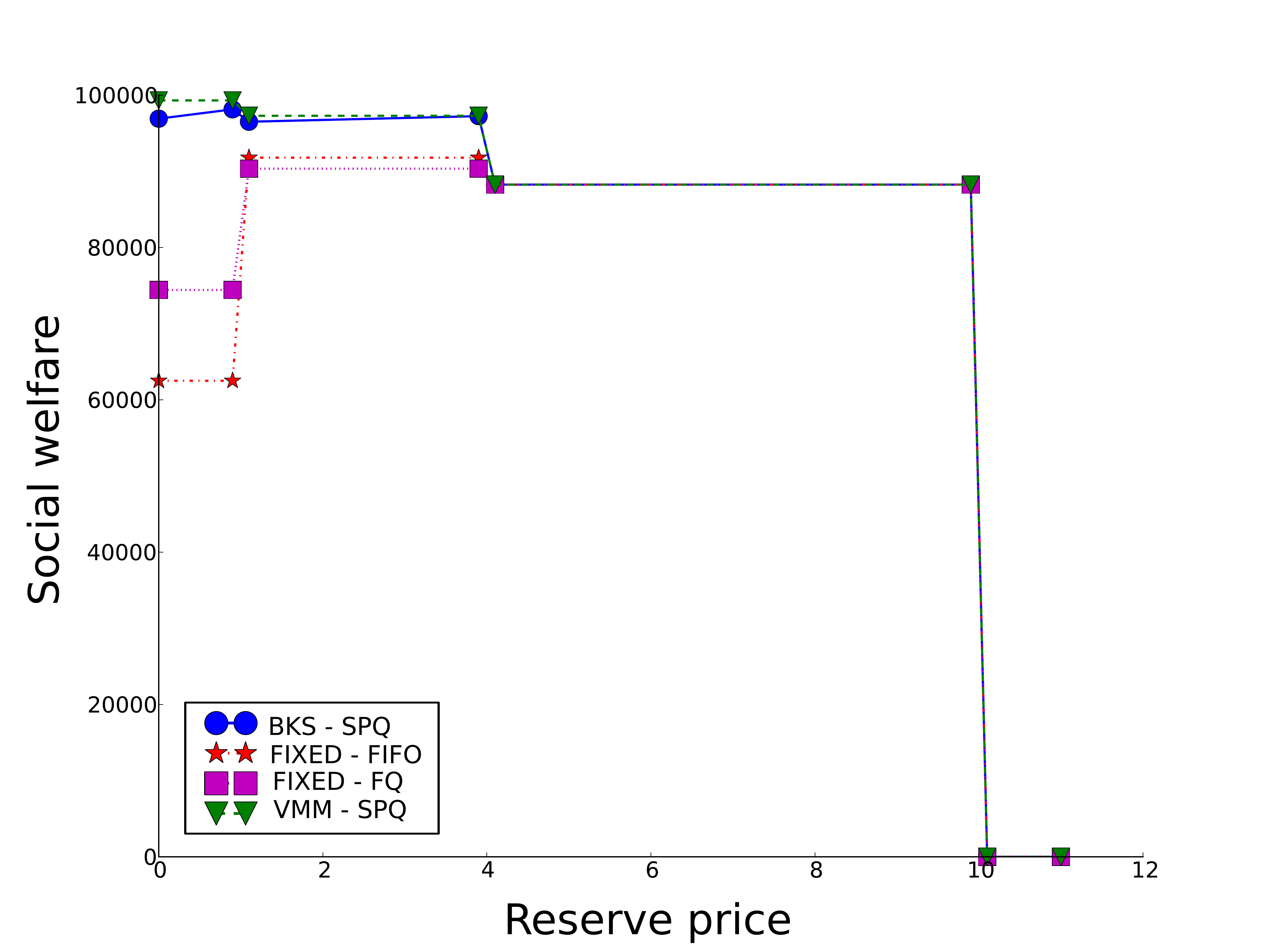}
\caption{Efficiency of routing policies as reserve prices varies.  Seller capacity is 25 KBps. Increasing the reserve price improves the efficiency of \fq{} and \fifo{}.}
\label{fig:efficiency-vs-reserve} 
\end{center}
\end{figure}

Fig.~\ref{fig:efficiency-vs-reserve} shows the effect of reserve price on
efficiency, in the same scenario, with seller capacity fixed at 25 KBps.
Raising the reserve price slightly to avoid allocating to the lowest value buyer
improves efficiency for \fifo{} and \fq{}.  Since \vmm{} and \bks{} naturally
prioritize high value data, raising the reserve price removes opportunities to

\begin{figure}[t]
\begin{center}
\includegraphics[width=0.65\linewidth]{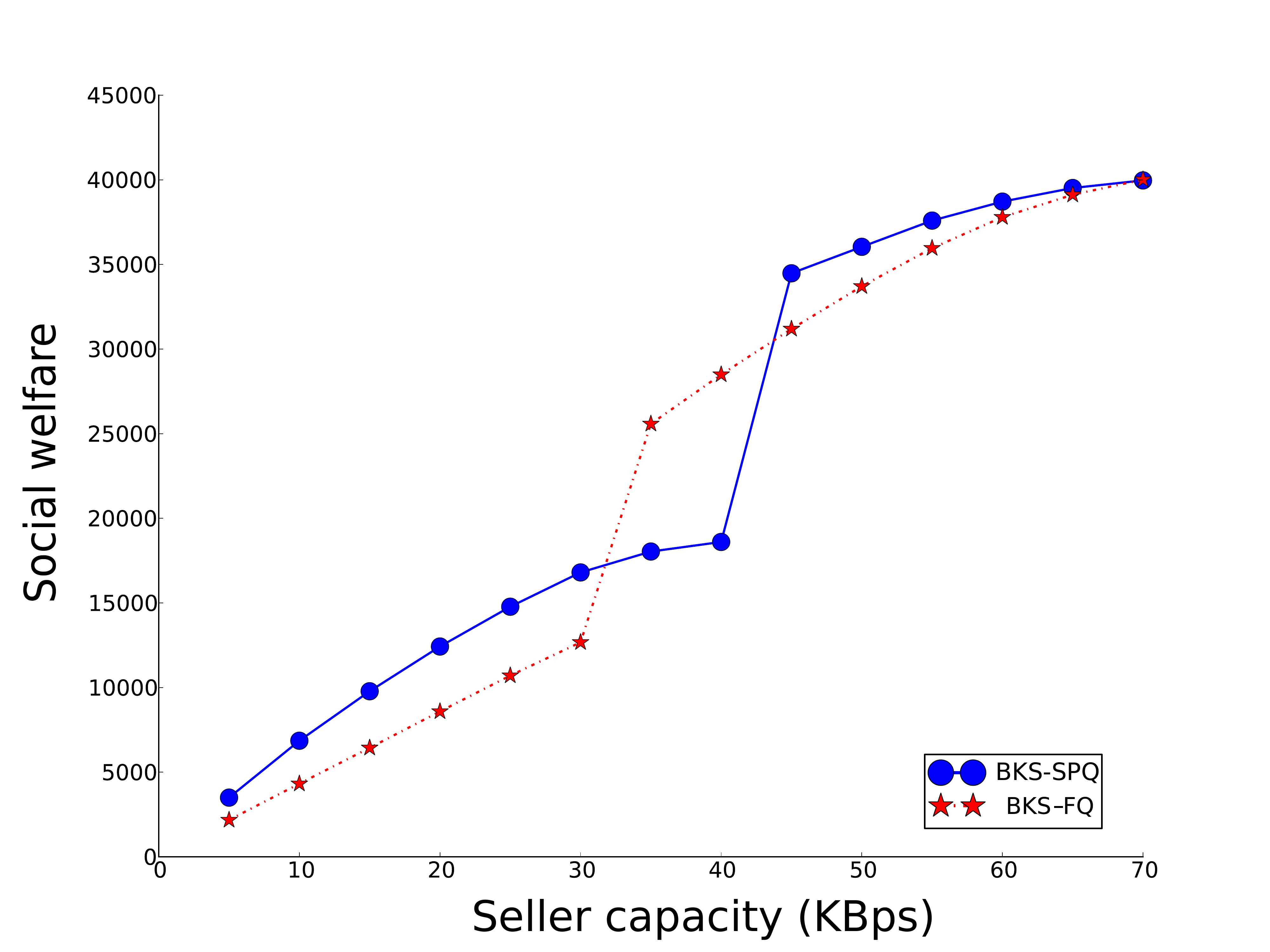}
\caption{Efficiency for \spq{} and \fq{} when high value buyers leave early, and
a low value buyer is impatient.  Deviating to \fq{} from \spq{} is profitable
for a range of seller capacities.}
\label{fig:efficiency-impatient} 
\end{center}
\end{figure}

We now look at a case where a seller can profit by deviating from \spq{}.  The
scenario is similar to the one above: three buyers, with per-KB values 10, 4,
and 1, and average demands 10, 10, and 30 KBps.  However, now the high and
medium value buyers depart after 90 seconds, and the low value buyer is
impatient, leaving after 60 seconds if her allocation is less than 500KB, and
otherwise continuing to send for the full 600 seconds.  If the seller fails to
allocate at least 500KB to the third buyer in the first minute, she misses out on
forwarding extra traffic later, once the two higher value bidders are gone.
Fig.~\ref{fig:efficiency-impatient} shows the efficiency of \bks{} with \spq{}
and \fq{} routing.  For some intermediate capacities, the value-ignorant \fq{}
routing policy is better, because it results in the low priority buyer
continuing to send.  Of course, if the seller knew the impatience
threshold, it could use a better hybrid strategy, allocating just enough
to buyer 3 to ensure she stays, and then allocating optimally to others.

\subsection{Revenue pooling}

Next, we study the effects of revenue pooling in \mechname{}, first
looking at a simplified case where all sellers have the same buyer
distribution, and then a more complex case where there are several
seller distributions.  We confirm our theoretical results, showing
that the center will not have to run a deficit, and also show that
pooling reduces the variance of seller revenue.
Fig.~\ref{fig:revenue-pooling-same} shows the pooled revenue vs
unpooled revenue for 200 sellers from a single distribution, with
seller capacity 40, buyers with per-KB values 2 and 3 and demand 10
KBps, a buyer with value 5 and demand 30 KBps, and no reserve price.
In addition to the primary goal of making \bks{} faithful for sellers,
pooling drastically reduces the variance in seller revenue, with
non-trivial pooled revenue for all sellers.  This also shows that the
tax rate for both pools is lower than 1, as predicted by the theory.

\begin{figure}[t]
\begin{center}
\includegraphics[width=0.65\linewidth]{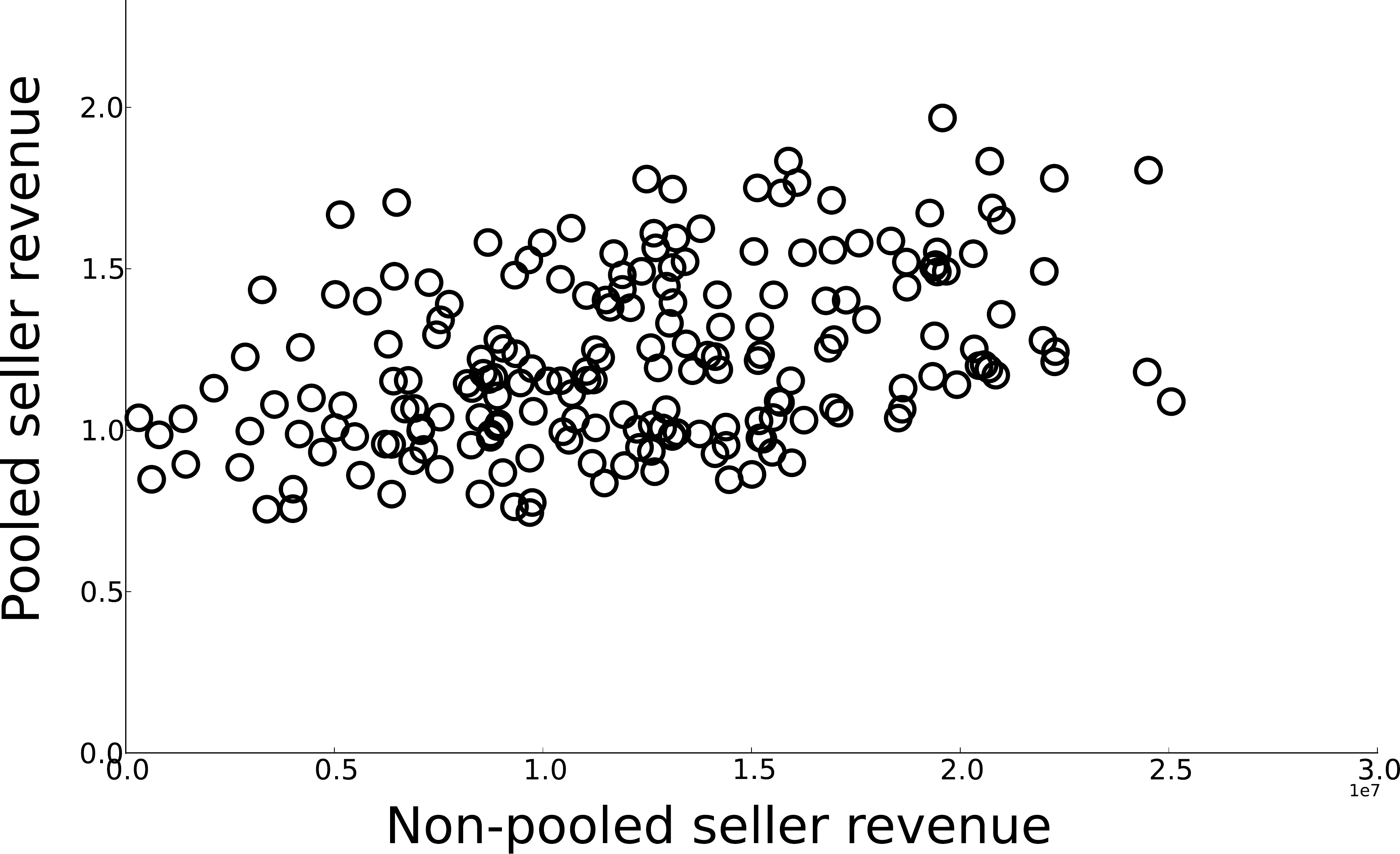}
\caption{Effect of revenue pooling when sellers are similar.}
\label{fig:revenue-pooling-same} 
\end{center}
\end{figure}

\begin{figure}[t]
\begin{center}
\includegraphics[width=0.65\linewidth]{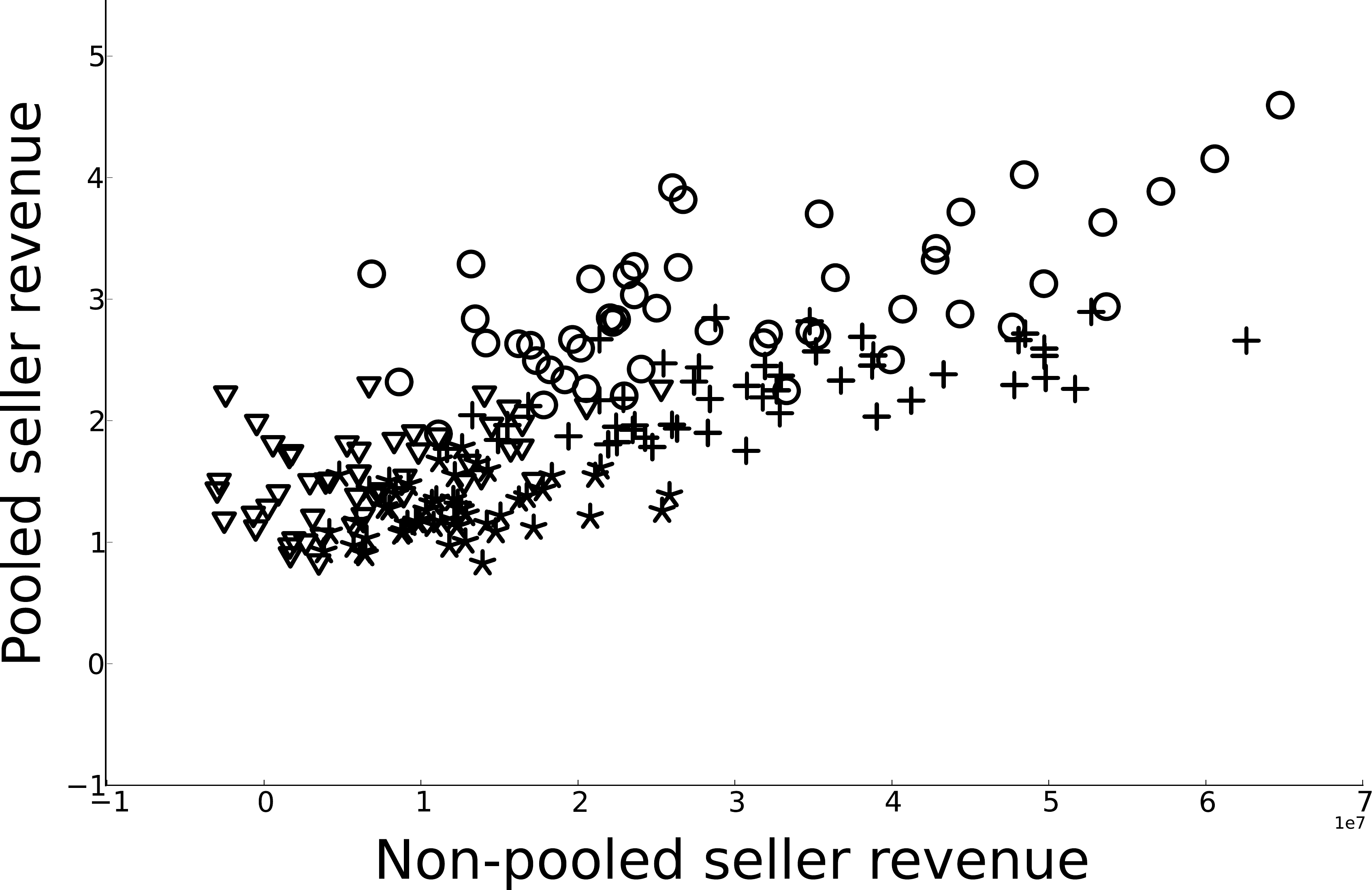}
\caption{Effect of revenue pooling when sellers are different.}
\label{fig:revenue-pooling-varied} 
\end{center}
\end{figure}

Fig.~\ref{fig:revenue-pooling-varied} shows the pooled revenue vs unpooled
revenue for 200 sellers from four different settings: seller capacities 40 or
60, and buyers with total demand 50 or 70, with higher buyer value when demand
is 70. These are indicated by different markers.  Note the dramatic reduction in
variance of pooled seller revenue for each seller category: without revenue
pooling, some sellers have negative revenue while others have very high revenue.
Pooling reduces this variance, while preserving the relative ordering between
seller types--sellers with higher non-pooled revenue have higher expected pooled
revenue.

\section{System implementation issues}
\label{sec:implementation}
Our scheme has three main components: a protocol for buyers to send bids to the
sellers, a priority queueing facility for sellers to allocate bandwidth to buyer
flows according to BKS, and a central server for revenue pooling and accounting.
We have prototyped the first two components.

An app on the buyer's side sends bids (along with duration of
validity of the bid) to a service on the seller device. The seller
device resamples the bids and uses them for
allocating bandwidth, using the standard traffic control (\texttt{tc})
facilities in the Linux kernel (also used by Android).
Specifically, upon receiving bids or updates to bids, the seller device sets
(\texttt{tc}) filters to classify flows into one of a fixed number of priority
classes based on their resampled bids, which enables the kernel to enqueue
packets in the appropriate priority queue.  Thus, we get efficient strict
priority queuing without any kernel modifications, with the caveat that we can only use a small
fixed number of queues.  This implies that we need to round down the resampled
bids to the nearest priority level. However, this modification preserves
truthfulness since the modified allocation function is still (weakly) monotone
in bids. Payments are made according to the un-rounded resampled bids.

\section{Conclusion}

We have introduced an approach to prioritized bandwidth access in a dynamic
environment. The method succeeds in aligning incentives on both the buy-side and
the sell-side of the market for a variety of stochastic demand models.  While
auctions can introduce an extra burden on users over charging a flat fee, they
can provide for efficient allocation and the complexity can be hidden through
automated bidding agents~\cite{chi10Seuken, Seuken_ec12}.

There are many areas for future work.  Natural extensions to the theory include
expanding to more expressive demand models, and considering valuation models
other than linear per byte such as value per rate and multi-dimensional
valuations. The \bks{} scheme has been extended to handle the appropriate
generalization of monotonicity through a self-sampling
approach~\cite{babaioff13}, and the challenge would be design dynamic
prioritization schemes that satisfy this cyclic monotonicity property for
stochastic demand models.

We make several assumptions that are useful in analysis and developing
engineering insights, even if they might not quite hold in reality.  More extensive
and detailed packet level simulations would be useful to study situations where
our assumptions such as natural demand models and linear per-byte valuations do
not hold. Analyzing real demand to determine the extent to which it confirms to
our models is also future work. Finally, developing our dynamic
prioritization system prototype further by designing appropriate user interfaces for
eliciting buyer values, integrating with applications to allow setting
traffic values, and evaluating performance such as computational overhead and
energy consumption also remains to be done.

\noindent \textit{Acknowledgements:} Research was sponsored by the Army Research Laboratory and was accomplished
under Cooperative Agreement Number W911NF-09-2-0053. 

%% arxiv does not run bibtex, so include bbl

\end{document}